\crefname{equation}{}{}
\theoremstyle{plain}
\newtheorem{theorem}{Theorem}
\crefname{theorem}{Theorem}{Theorems}
\crefname{conjecture}{Conjecture}{Conjectures}
\newtheorem{proposition}{Proposition} 
\crefname{proposition}{Proposition}{Propositions}
\newtheorem{corollary}{Corollary} 
\crefname{corollary}{Corollary}{Corollaries}
\newtheorem{lemma}{Lemma} 
\crefname{lemma}{Lemma}{Lemmas}
\newtheorem{example}{Example}
\crefname{example}{Example}{Examples}
\newtheorem{definition}{Definition}
\crefname{definition}{Definition}{Definitions}
\theoremstyle{remark}
\newtheorem*{remark}{Remark}
\crefname{appendix}{Appendix}{Appendices}
\crefname{section}{Section}{Sections}
\crefname{figure}{Figure}{Figures}
\crefname{equation}{Equation}{Equations}
\newcommand{\E}{\mathbb{E}}
\title{Cued to Queue: \linebreak Information in Waiting-Line Auctions\thanks{
We thank Itai Ashlagi, Dirk Bergemann, Modibo Camara, Matthew Gentzkow, Yannai Gonczarowski, Oliver Hart, Ravi Jagadeesan, Zi Yang Kang, Scott Duke Kominers, Shengwu Li, Annie Liang, Eric Maskin, Paul Milgrom, Michael Ostrovsky, Andrzej Skrzypacz, Tomasz Strzalecki, Takuo Sugaya, Bob Wilson, Frank Yang, Weijie Zhong. We also thank seminar and conference participants at Brown, Harvard, Stanford, Stony Brook, and WZB Berlin for helpful discussions and suggestions. Tang thanks the NSF Graduate Research Fellowship for financial support. Any errors are our own.
}}
\author{
Jack Hirsch\thanks{Department of Economics, Harvard University, Cambridge, MA 02138. \href{mailto:jhirsch1@g.harvard.edu}{jhirsch1@g.harvard.edu}.}
\and
Eric Tang\thanks{Graduate School of Business, Stanford University, Stanford, CA 94305. \href{mailto:ertang@stanford.edu}{ertang@stanford.edu}.}
\vspace{0.5cm}
}
\date{\today}
\begin{document}

\maketitle
\begin{abstract}
We study the effect of providing information to agents who queue before a scarce good is distributed at a fixed time. Many information policies reveal \emph{sudden bad news}, when agents learn the queue is longer than previously believed. Sudden bad news causes assortative inefficiency by prompting multiple agents to simultaneously join the queue. If the value distribution has an increasing (decreasing) hazard rate, information policies that release sudden bad news increase (decrease) total surplus, relative to releasing no information. If agents incur entry costs and the hazard rate is decreasing, the optimal policy reveals only when the queue is full. (\emph{JEL} \text{D44, D61, D82}) 
\end{abstract}

\vfill

\pagebreak

\section{Introduction}

Queues are often used to distribute a scarce good at a fixed time.  
Agents choose when to join the queue, trading off their probability of obtaining the good with the time they waste in line. The result is a ``waiting-line auction,'' in which agents pay with time instead of money.  
For example, when a food pantry or soup kitchen has limited rations to distribute, individuals may queue well in advance of the opening time to receive a meal. Prior to the release of the iPhone X, some prospective customers queued for days outside of Apple stores to secure the device.\footnote{\href{https://www.businessinsider.com/iphone-x-people-queue-at-apple-stores-for-days-2017-11}{Business Insider (2017)}: ``In London and elsewhere in the world, queues of die-hard Apple fans are forming, prepared to camp out for a night or more to buy the eagerly anticipated tenth anniversary Apple smartphone'' (\cite{new_pric17}). \textit{See also} \href{https://www.businessinsider.com/apple-says-to-line-up-early-at-stores-if-you-want-an-iphone-x-2017-10}{Business Insider (2017)}, quoting an Apple press release: ``iPhone X will be available... in Apple Stores beginning Friday, November 3 at 8:00 a.m. local time. Stores will have iPhone X available for walk-in customers, \textit{who are encouraged to arrive early}'' [emphasis added] (\cite{news_lesw17}).} Homeless shelters allocating beds, theatres selling popular concert tickets, and gas stations selling rationed fuel face similar queues. Our paper studies how access to real-time information about the number of people in line changes the welfare properties of such queues.  

As a running example, suppose that a food pantry plans to distribute 50 parcels of food at noon, and more than 50 individuals desire a parcel. These agents face a trade-off: the earlier they arrive, the higher their probability of obtaining the good, but the more time they waste in line. Agents balance these two forces to determine when they will join the queue. However, information may alter their decisions. If an agent learns that there are already 45 people in line an hour before the pantry opens, she may hurry to join the queue before the line reaches 50 people. Conversely, if she discovers that there are only five people in line, she may wait to join until it is nearly noon, since more supply remains. Indeed, agents may rely on such information to guide their decisions, using social media or word of mouth to learn about the current queue length.

The distributor of the good may also provide information. For example, the food pantry could release a social media post updating individuals
when the line has reached a certain length, or they could post the length of the line every
hour. More generally, technology such as social media and live cameras may enable distributors to provide real-time updates about the length of the queue. The choice of whether to release information, and what information to release, changes individuals' queueing behavior.

In our model, the distributor of the good does not change the allocation scheme: the good is always allocated to the agents who arrive earliest, with ties broken randomly. In many settings, the distributor may not have the flexibility to alter the allocation scheme due to fairness concerns or concerns about the presence of speculators. In such cases, revealing real-time information (e.g., making an online announcement) may also be logistically simpler than altering the allocation procedure (e.g., running a lottery among all agents present). The distributor can influence allocation probabilities by committing to an information policy that controls the flow of public information that agents receive about the queue length. Returning to our example, if the food pantry commits to announce the length of the line every hour, individuals may modify their queueing behavior in anticipation of, and in response to, the incoming updates. 
We study how information affects the equilibrium allocation of the good, how long individuals wait, and total welfare.

\cref{thm:FOSD-assortative-efficiency} proves that under a condition we call \emph{sudden bad news}, providing real-time information results in assortative inefficiency: agents with the highest values are not guaranteed to receive the good. In equilibrium, agents form common beliefs about the number of individuals who have already joined the queue. When information leads to a jump in beliefs about the length of the queue, we say that agents received \emph{sudden bad news}. Many natural real-world information policies have a positive probability of releasing sudden bad news. These include information policies that (i) announce the length of the queue at a fixed time, (ii) announce when the queue reaches a certain length, or (iii) announce every time an agent joins the queue.

The intuition behind our main result is simple: if agents suddenly learn that the queue is longer than expected, a ``rush'' of agents may simultaneously attempt to join. If the number of agents rushing the queue exceeds the number of items remaining, then the objects are randomly allocated among those agents, breaking assortative efficiency. In contrast, if the information policy commits to never release information, the equilibrium is assortatively efficient. Thus, by controlling the flow of information about the length of the queue, the distributor can affect both assortative efficiency and the total waiting times of agents.

\cref{thm:FOSD-assortative-efficiency} shows that a wide range of information policies induce assortative inefficiency. While \cite{grrawa14} study a similar question in an auction setting, they analyze only the ``information policy'' in which agents learn each time someone purchases an item (for us, each time someone enters the queue). We instead employ a novel proof technique to analyze a wide range of information policies. Our proof uses the Payoff Equivalence theorem to abstract away from potentially complex equilibria induced by information policies, which allows us to simplify our analysis to sudden bad news. Our result also sharpens a general intuition for why real-time information causes inefficiency: ``bad news'' can drive many agents to suddenly rush the queue. 

Furthermore, we derive welfare implications that differ from those of the auction setting in \cite{grrawa14}. In an auction, assortative efficiency maximizes total surplus. 
A queue, on the other hand, is a money-burning mechanism; the ``payment'' is time spent waiting in line, which only reduces total surplus. As a result, total surplus is equivalent to consumer surplus, and assortative efficiency may not maximize consumer surplus. \cref{thm:FOSD-assortative-efficiency} suggests that a distributor should consider the impact of releasing such information, and we provide welfare results to guide her decision. We use \cref{thm:FOSD-assortative-efficiency} and existing work on money-burning mechanisms (e.g., \cite{haro08}, \cite{cond12}) to partially characterize when information policies increase or decrease total surplus.

Concretely, \cref{thm:information-efficiency} states that if the distribution of values for the good has an increasing hazard rate (as in the uniform distribution), then any information policy that can release sudden bad news \emph{increases} total surplus, relative to releasing no information. If the value distribution has a decreasing hazard rate (as in a Pareto distribution), then any such information policy \emph{decreases} total surplus. Intuitively, sudden bad news can cause assortative inefficiency, but it can also reduce agents' wait times. The hazard rate determines which effect dominates the welfare analysis. When the value distribution has a thick tail, it is more important to screen for high-value agents; when it has a thin tail, it is more important to reduce wait times.

\cref{Extensions} models a queue in which agents face an entry cost when attempting to join. For example, agents may incur a cost in time or money to travel to the distribution site. In such settings, an information designer may have further reason to release information: information can prevent agents from wastefully paying the entry cost without obtaining the good. \cref{cor:entry-cost-total-surplus} shows that when the hazard rate of the value distribution is decreasing, an information designer maximizes total welfare by announcing only when the queue is full. Indeed, when the hazard rate is strictly decreasing, this information policy yields a strictly higher surplus than either releasing no information or the fully informative policy of announcing every time an agent joins the queue. Thus to maximize surplus, an information designer must selectively release an ``intermediate'' amount of information. 

The paper is organized as follows. 
\cref{sec:related-literature} describes related literature. \cref{Model} introduces our model and formally demonstrates the equivalence between queues and strictly descending multi-unit auctions. \cref{sec:results} shows that sudden bad news causes assortative inefficiency. \cref{Optimality} partially characterizes how information affects total surplus. \cref{Extensions} considers an extension with entry costs. \cref{Conclusion} concludes. All omitted proofs are in the appendix. 

\subsection{Related Literature}\label{sec:related-literature}

\subsection*{Multi-Unit Descending Auctions}
Our model is similar to a multi-unit descending auction. \cite{hosh82} is the first paper to demonstrate that some queues, which they call ``waiting-line auctions,'' resemble Dutch auctions. However, they do not consider how information affects queueing behavior. The rush of agents that enters the queue following sudden bad news in our model is similar in spirit to \cite{bukl94}. They show that if prices continuously fall, information about a sale can induce agents to purchase immediately. However, they maintain assortative efficiency by allowing prices to increase to clear the market, whereas we model the “price” or entry time as strictly descending, reflecting the real-life constraints of queues. 

The work most similar to ours is \cite{grrawa14}, which studies a strictly-descending multi-unit Dutch auction. The price at which each object is sold is publicly announced, and therefore may induce a rush of agents. \cite{grrawa14} prove that any equilibrium is assortatively inefficient, and derive a symmetric equilibrium in a simple setting characterized by a set of initial value problems. 

While the auction model of \cite{grrawa14} is similar to our queue model, they only consider the full-disclosure information policy, which reveals each time an agent joins the queue. Our results apply to a much broader set of information policies, such as messages at a fixed time or probabilistic messages. The proof technique in \cite{grrawa14} does not extend to these information policies. Our work develops a new technique that uses the Payoff Equivalence theorem, which also provides a novel intuition for the underlying cause of assortative inefficiency.\footnote{In particular, we show that the phenomenon of ``sudden bad news'' leading to assortative inefficiency is a much more general occurrence than in the full-disclosure information policy.} Furthermore, because we analyze a queueing setting instead of an auction setting, information has different implications for welfare. We provide distinct results on how information affects total surplus, as well as associated policy recommendations for information designers.

\subsection*{Queueing Models}

Many queueing models consider stochastic agent arrivals and a limited capacity to process arriving agents. The seminal early work is \cite{naor69}, and \cite{haha03} provide a survey. More recently, \cite{lesh22} considers a setting with stochastic arrivals of agents who have heterogeneous preferences between two over-demanded items with waitlists, and demonstrates that information policies can increase total surplus. Related work on dynamic assignment includes \cite{baleya20}, \cite{aniyma23} and \cite{muth24}. \cite{chte24} consider a dynamic queue in which homogeneous agents arrive stochastically and decide whether to join the queue. In their model, the queue designer controls agents' entry and exit from the queue, their queue priority, and the information they receive.

We model a fundamentally different type of queue. In our setting, all agents are present at the beginning of the game, all items are allocated at the same time, and each agent makes a single decision about when to join the queue. Our model describes a line that forms before a distributor (such as a food pantry or store) allocates goods at a fixed opening time. The dynamic queueing literature is more apt to model a setting such as a public housing waitlist, in which the waitlist's length fluctuates as new housing becomes available and new agents arrive. In contrast, a queue forms in our setting not because agents arrive faster than items arrive, but instead because agents choose to join the queue before the distribution time. As a result, in our model, agents choose \emph{when} to join the queue; in the work cited above, agents generally choose \emph{whether} to join a queue. Consequently, we employ analytic tools from auction theory rather than the stationary distributions used in the dynamic queueing literature. To our knowledge, our paper is the first to study the effects of information on queues that form due to a fixed distribution time and a fixed population of agents.

\subsection*{Money-Burning Mechanisms}

A queue is an example of a ``money-burning'' or ``ordeal'' mechanism, in which agents are screened by a costly ordeal that does not transfer surplus (as opposed to screening via money). In such settings, there is a trade-off between wasteful costly screening and assortative efficiency. 
\cite{dwor23} compares screening with ordeal mechanisms to lump-sum transfers and \cite{tats03} compares costly screening to lotteries, but neither work analyzes the role of information in queues. 
\cite{haro08} and \cite{cond12} construct consumer surplus-maximizing ordeal mechanisms for allocation problems. \cref{thm:information-efficiency} combines \cref{thm:FOSD-assortative-efficiency} and the results in these papers to partially characterize how information policies change total surplus in our queueing model.

\section{Model}\label{Model}

\subsection{Setup}

Consider $n$ risk-neutral agents indexed by $i\in\{1, \ldots, n\}$. Each agent has unit demand for one of $k < n$ identical copies of an item. 
Agents have private values drawn independently and identically from distribution $F$, which we assume is atomless and has full support on $[0, \bar{v}]$.\footnote{All results extend to distributions with unbounded support, so long as the expected value of the distribution is finite. If the expectation of the distribution is infinite, all results hold except for the strict welfare comparisons in \cref{thm:information-efficiency} and \cref{cor:entry-cost-total-surplus}. \cref{app:unbounded-support} provides further details.} We fix model primitives $n$, $k$, and $F$ unless stated otherwise, and we assume that they are all common knowledge.  
Throughout the paper, $Y_k^{(n)}\sim F^{(n)}_k$ denotes the $k^{\text{th}}$ order statistic of $n$ draws from $F$.

Instead of paying to receive a copy of the good, agents queue in line. We assume that the utility of agents is linear in time. We further assume that each agent has the same disutility for each unit of time spent waiting in line, normalized to $1$.\footnote{If agents have heterogeneous waiting costs, one can normalize values to value per time waited. However, the implications for total welfare will then differ.} 

Time $t\in\mathbb{R}_+$ continuously descends and represents the time remaining until the good is distributed, at time $t = 0$. If there are $q$ agents in the queue, we say there are $k - q$ \emph{items remaining}, although no agent receives the good until $t=0$.

At any time, an agent may attempt to join the queue. An agent who attempts to join the queue observes the queue's length, then decides whether to stay in the queue or leave. We assume that if an agent attempts to join the queue, she cannot leave and later rejoin. An agent therefore \emph{successfully} joins the queue if there are still items remaining ($q < k$). If more agents attempt to join the queue than items remain, ties are broken uniformly at random. We also assume that attempting to join the queue is costless, so an agent who unsuccessfully attempts to join the queue obtains payoff $0$.\footnote{We allow for entry costs in \cref{Extensions}.} We discuss the rationale for these assumptions in \cref{subsec:game-procedure-discussion}.

Each agent $i$ therefore obtains utility
\begin{align*}
    u(v_i, t) &= 
    \begin{cases}
        v_i - t & \text{ if $i$ successfully joins at $t$} \\
        0 & \text{ otherwise.}
    \end{cases}
\end{align*}

The information designer (also called the announcer) commits to a commonly known information policy. A message released at time $t$ instantaneously becomes common knowledge to all agents at time $t$. Natural information policies include announcing every time an agent joins the queue or announcing the length of the queue at a particular time. We let $\mathcal{Q}(p)$ denote the queueing game with information policy $p$.

\subsection{Game Procedure} 

We model the queue as a descending clock game. As the clock descends, agents may attempt to join the queue and messages may arrive. Our model is complicated by the timing of interactions between messages and agent arrivals. We therefore formalize the model in a sequential format, with a finite number of ``bidding'' rounds. Throughout the paper, we interchangeably refer to the time at which an agent attempts to join the queue as his bid, often in the context of employing results from auction theory.

To summarize our formal model, in each round, the announcer submits the next time she would like to send a message (conditional on no agent joining the queue before that time), and each agent submits the next time he would like to join the queue (conditional on no message arriving before that time). The next message or queue entry is processed, with ties broken in favor of the announcer. Each time a message arrives, agents may respond by instantaneously attempting to join the queue. When an agent joins the queue, the announcer may instantaneously release a new message. We use she/her pronouns for the announcer and he/him pronouns for agents.

Let $T$ denote the current game clock (which is descending). The order of play is as follows:

\textbf{Bidding Stage}. Each agent $i$ who has not joined the queue updates his sealed bid $b_i\in[0, T)$ if there has been a new message since the previous bidding stage. The announcer draws her next message and message time $(m, t)$ with $t \in [0, T)$. The announcer's draw does not depend on the sealed bids.

\textbf{Resolution Stage}. Let $b:=\max_i \{b_i\}$ denote the highest bid. Update the clock to $T = \max\{b, t\}$. If $b > t$, then the highest bidder(s) join the queue. Otherwise, message $m$ is released. The following steps then repeatedly alternate at time $T$: 

\begin{itemize}
    \item If a message has arrived, bidders may (instantaneously) attempt to join the queue. If no bidder does so, the game returns to the bidding stage.
    \item If a bidder has attempted to join the queue, the announcer may (instantaneously) send a message. If she does not, the game returns to the bidding stage.
\end{itemize}

When $T=0$, items are allocated and the game ends.

\begin{remark}
    Once the queue is full, any agent not in the queue will obtain payoff $0$, regardless of his actions. This is because any agent not yet in the queue will not obtain the good and therefore would never choose to wait in line. For convenience, and without loss of generality, we describe agents as learning when the queue is full.

    In \cref{Extensions}, we model agents who pay an entry cost to join the queue.
    In that case, whether or not the information designer announces when the queue is full affects agents' strategies.
\end{remark}

\subsection{Discussion of Game Procedure}\label{subsec:game-procedure-discussion}

    Our baseline model assumes that attempting to join the queue is costless. In \cref{Extensions} we incorporate entry costs for joining the queue, such as time or money spent traveling to the queue. We also assume that when an agent attempts to join the queue, he immediately learns the queue length and decides whether or not to stay. In many cases, an agent joining a queue can observe how many others are currently in line to determine if the queue is full. 
    
    If an agent leaves, however, we assume he cannot rejoin at a later time. In many real-world settings, we find it unlikely that agents typically plan multiple trips to and from the queue.\footnote{For example, a prospective concert attendee may choose one time to travel to the concert venue, without entertaining the thought of traveling between the venue and their home multiple times.} There are also two practical reasons we forbid queue re-entry. First, our paper focuses on the effects of public information on queues, and re-entry allows agents to acquire private information. Second, agent behavior can become quite complex if agents can leave and rejoin the queue. For example, an agent who joins the queue and observes that there are few other agents currently in the queue may find it optimal to leave the queue and rejoin at a later time. For these reasons, we simplify agents' behavior by assuming they cannot rejoin the queue.

    Finally, ties between agents are broken uniformly at random. For example, if two items remain and three agents attempt to join the queue simultaneously, each has a two-thirds chance of success. This assumption is motivated by the reality of physical queues. In most real-world queues, if several agents arrive to a queue at the same time, their order in line is essentially random. Stated differently, agents who arrive simultaneously can no longer be screened by their willingness to wait. Our uniform tiebreaking assumption could also model a short, stochastic delay between when agents attempt to join the queue and when they actually arrive at the queue.

\subsection{Information Policies}

In the main text we describe information policies and our equilibrium concept with verbal descriptions. We provide mathematical formulations in the appendix.

The \emph{message history} captures the content of past messages and the time at which they were sent, and is visible to both the announcer and the agents. The \emph{queue history} captures the history of past arrivals to the queue, and is visible only to the announcer. We refer to the joint message history and queue history as the \emph{history}. 

\begin{definition}[Information Policy]    
    An \emph{information policy} specifies the time and content of publicly announced messages as a function of the history and current time.
\end{definition}

We further require that information policies can send only a finite number of messages (this is a technical assumption to ensure the game ends). The baseline information policy that we use for comparison is the ``trivial'' information policy that never sends a message.

\begin{example}[Trivial]\label{ptrivial}
    The announcer never sends a message. 
\end{example}
We also include two natural examples of substantive information policies. Note that information policies can be probabilistic, so that messages may or may not arrive.
\begin{example}[Fixed Time]\label{pfixedtime}
    The announcer reveals the current queue length at fixed time $\tau$, where $\tau$ is common knowledge at the beginning of the game. 
\end{example}

\begin{example}[Full Revelation]\label{peachentry}
    The announcer reveals the updated queue length each time an agent joins the queue. As a result, agents always know the length of the queue. 
\end{example}

\subsection{Equilibrium}

We study Perfect Bayesian Equilibria, defined by agents' strategies and beliefs. Each agent follows a strategy that maps his value, the current message history, and the current time to an entry time. In equilibrium, all agents derive beliefs about the the number of items remaining, determined by the information policy and model primitives, the message history, and the current time. 

Without loss of generality, we may limit our attention to agents' beliefs about the \textit{positive} quantities of items remaining, $\mu \in \Delta(\{1, \ldots, k\})$—recall that if the queue is full, an agent obtains payoff $0$ regardless of her action. Her best response is therefore determined only by $\mu$.
Since information policies are common knowledge and release public messages, all agents who have not yet joined the queue share a common belief.

After a message arrives at some time $t$, agents' beliefs may change. We therefore index agents' beliefs in a way that allows us to describe ``limiting beliefs'' before and after a message can arrive. We let $\mu_{t^+}$ denote agents' beliefs about the number of remaining items at time $t$, but \emph{before} the announcer has had the opportunity to send any message at time $t$.
We let $\mu_{t^-}$ denote agents' beliefs about the number of remaining items at time $t$ immediately \emph{after} the announcer has had the opportunity to send a message. When necessary, we continue in this fashion by using $\mu_{t^{--}}$, etc. to index further rounds of the nested game occurring at the same time-step.\footnote
{See \cref{sec:appendix-definitions} for a formal definition of $\mu_{t^+}$ and $\mu_{t^-}$. 
We use $t^{--}$ to index beliefs after the announcer has had two chances to send messages at time $t$. That is, $t^{--}$ indexes beliefs immediately after the second instance that the announcer has the opportunity to send a message in the Resolution Stage at time $t$, if this step is reached. We then use $t^{---}$ for the third instance that the announcer has the opportunity to send a message at time $t$, and so forth.
} 

Our main analysis studies the equilibrium properties of queueing games defined by various information policies. We focus on total surplus and assortative efficiency.

\begin{definition}[Assortatively Efficient]
    An equilibrium is \emph{assortatively efficient} if the $k$ agents with highest values are always allocated the good.
\end{definition}

Since a queue is a money-burning mechanism, time spent waiting in line is lost surplus. As a result, an assortatively efficient equilibrium does not necessarily maximize total surplus: the screening costs necessary to ensure assortative efficiency may outweigh the surplus gains of allocating the items to the agents with the highest values. In \cref{Optimality}, we analyze how information policies affect total surplus.

\section{Results}\label{sec:results}

\subsection{Benchmark: No Information}

Consider the trivial information policy from \cref{ptrivial}. Since there are no messages, agents never learn new information, so the game is strategically equivalent to each agent writing down the time at which he will enter the queue. This is the same as choosing a bid in a multi-unit sealed-bid discriminatory price auction.\footnote{I.e., a multi-unit auction in which each winner pays his bid.}
As in a discriminatory price auction, agents trade off their probability of winning with their surplus conditional on winning. The only difference between the queue model and the auction model is the form of payment: money or time.

\cite{reny99} proves that a multi-unit sealed-bid discriminatory price auction has a symmetric, assortatively efficient equilibrium characterized by the bidding function\footnote{As in our setting, \cite{reny99} considers agents with unit demand and in which values are private and drawn independently and identically.} 

\begin{align}\label{discrimbid}
    \beta(v) &= \mathbb{E}[Y_k^{(n-1)}| Y_k^{(n-1)} < v].
\end{align}

In any symmetric equilibrium of this auction, agents cannot pool.\footnote{If agents pool, agents in the pooling region could profitably deviate by increasing their bid by some $\varepsilon >0$.} Hence any symmetric equilibrium is assortatively efficient, and by the Payoff Equivalence theorem, payments must be defined by (\ref{discrimbid}). This is therefore the unique symmetric equilibrium.

\begin{corollary}\label{cor:trivial-info-policy}
    If information policy $p$ is trivial, then the unique symmetric equilibrium of $\mathcal{Q}(p)$ is assortatively efficient. 
\end{corollary}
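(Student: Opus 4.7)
The plan is to reduce the queueing game under the trivial information policy to a well-studied static auction format and then invoke existing characterizations. Since no messages are ever sent, at every bidding stage an agent's information set consists of only his own value and the empty message history. His strategy therefore collapses to a single choice of entry time $b_i \in [0,\bar v)$ as a function of $v_i$ alone, with uniform tiebreaking among the $k$ earliest arrivals. This is strategically identical to a $k$-unit, unit-demand, sealed-bid discriminatory-price auction in which each winner's payment equals his own bid. The allocation rule, tiebreaking rule, and payoff structure all coincide, so any (symmetric) equilibrium of $\mathcal{Q}(p)$ corresponds to a (symmetric) equilibrium of this auction and vice versa.

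Next, I would invoke \cite{reny99} to exhibit the symmetric, strictly increasing equilibrium with bidding function $\beta(v)=\E[Y_k^{(n-1)}\mid Y_k^{(n-1)}<v]$. Because $\beta$ is strictly increasing, the $k$ agents with highest values are always the $k$ earliest arrivals; hence this equilibrium is assortatively efficient. This gives the existence half of the claim.

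For uniqueness among symmetric equilibria, I would argue that any symmetric equilibrium bidding function $b(\cdot)$ is strictly monotone, so it allocates assortatively, and then pin it down by payoff equivalence. Weak monotonicity of $b$ follows from a standard single-crossing argument: a higher type has a larger net value from any given (winning probability, expected time cost) pair. Strict monotonicity then reduces to ruling out pooling. If a positive-measure set of types all submit the same bid $\hat b$, the tiebreaking rule gives each of them a winning probability strictly less than one; any such type could deviate to $\hat b + \varepsilon$, paying only $\varepsilon$ more in expectation and strictly raising his winning probability by a non-vanishing amount, a profitable deviation for small $\varepsilon$. Hence the symmetric equilibrium has no atoms and is strictly increasing, which delivers assortative efficiency. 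Given strict monotonicity, the Payoff Equivalence theorem (or direct envelope integration of the interim utility) determines the interim expected payment of a type-$v$ agent as $v\Pr[Y_k^{(n-1)}<v] - \int_0^v \Pr[Y_k^{(n-1)}<s]\,ds$, and because each winner pays exactly his own bid, the bidding function is forced to equal \eqref{discrimbid}.

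The main obstacle is the no-pooling step, since in queueing settings one must check the deviation argument near the boundaries of the bid space—particularly at $\hat b = 0$ (where the ``bid'' is essentially arriving at the last instant) and at the upper end (where $\hat b$ cannot exceed $\bar v$). Once pooling is excluded everywhere, strict monotonicity, assortative efficiency, and the explicit form of $\beta$ follow, and with them the corollary.
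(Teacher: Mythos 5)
Your proposal is correct and follows essentially the same route as the paper: reduce the trivial-policy queueing game to a multi-unit sealed-bid discriminatory price auction, invoke \cite{reny99} for the symmetric assortatively efficient equilibrium with bidding function \eqref{discrimbid}, rule out pooling in any symmetric equilibrium via an $\varepsilon$-overbid deviation, and pin down uniqueness by the Payoff Equivalence theorem. The extra detail you supply on single-crossing and the boundary cases of the no-pooling step is a harmless elaboration of the paper's footnoted argument, not a different approach.
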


In our benchmark, releasing no information results in assortative efficiency, as the agents with highest values join the queue earliest. \cref{thm:FOSD-assortative-efficiency} shows that a wide range of information policies instead result in an assortatively inefficient allocation.

\subsection{Main Results}

To analyze how information policies affect the ultimate allocation, we define a notion of ``sudden bad news'' and show that many natural information policies release sudden bad news with positive probability. \cref{thm:FOSD-assortative-efficiency} then shows that if an equilibrium has a positive probability of releasing sudden bad news, it is assortatively inefficient.

\begin{definition}\label{def:sudden-bad-news}
    We say \emph{sudden bad news} is revealed at time $t$ if (i) $\mu_{t^-}$ is strictly FOSD dominated by $\mu_{t^+}$ and (ii) $\mu_{t^+}$ is weakly FOSD dominated by $\mu_{t'}$ for all $t' > t$. 
\end{definition}

Condition (i) holds when agents learn that there are strictly fewer objects remaining than they had previously believed. Intuitively, this leads agents to bid more aggressively, but the current time $t$ imposes a cap on how aggressively agents can bid. Condition (ii) ensures that $\mu_{t^-}$ is the ``most pessimistic belief'' agents have had in the game thus far.\footnote{Note that sudden bad news can also occur in the absence of a message. For example, suppose the designer commits to send one message at time $\tau$ if there are fewer than five agents in the queue. If no message arrives at time $\tau$, agents learn there are at least five agents in the queue, and their beliefs update discontinuously.}
Many intuitive information policies have a positive probability of releasing sudden bad news in any equilibrium:

\begin{itemize}
    \item \cref{pfixedtime} (fixed time). The message has a positive probability of revealing that one item remains, which FOSD dominates any previous belief. Even if the message occurs probabilistically or after a delay, there is still a positive probability of sudden bad news. 
    
    \item \cref{peachentry} (full revelation). Every message is sudden bad news. Agents always know the exact queue length, so learning that a new agent entered is sudden bad news. Delays or probabilistic messages still result in sudden bad news.\footnote{If the messages occur probabilistically or with a delay, there is still a positive probability of announcing that only one item remains, which is sudden bad news.}
    
    \item An information policy that reveals when exactly $k'$ items remain. Before the message, agents know that there must be strictly more than $k'$ items remaining, so the message reveals sudden bad news.
\end{itemize}

\begin{theorem}\label{thm:FOSD-assortative-efficiency}
    Fix information policy $p$. Any equilibrium of $\mathcal{Q}(p)$ with a positive probability of revealing sudden bad news is assortatively inefficient.
\end{theorem}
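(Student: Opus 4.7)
Plan. The plan is a proof by contradiction, driven by the Payoff Equivalence theorem. Suppose some equilibrium of $\mathcal{Q}(p)$ is assortatively efficient. Its allocation rule (top-$k$ by value) then coincides with that of the sealed-bid discriminatory benchmark of \cref{cor:trivial-info-policy}, so the Payoff Equivalence theorem pins down the interim expected payment of every type $v$ as $\int_0^v v' f_k^{(n-1)}(v')\, dv'$. The first structural step I would prove is that in any such assortatively efficient equilibrium, no two outside agents attempt to enter the queue simultaneously with positive probability at a moment when tiebreaking is decisive (strictly fewer items remain than entrants, but at least one remains). Otherwise, since $F$ is atomless, the two rushers almost surely have distinct values, and random tiebreaking would, with positive probability, admit the lower-value agent while excluding the higher-value one, contradicting assortative efficiency.

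The heart of the argument is then that sudden bad news at time $t$ necessarily triggers exactly such a simultaneous entry. Condition on the event that sudden bad news is revealed at $t$ (positive probability by hypothesis). Condition (ii) of \cref{def:sudden-bad-news} ensures that beliefs have been monotonically (weakly) worsening up to $t^+$, so no agent has rushed prior to $t$ and the equilibrium bidding function at history $(t^+,\mu_{t^+})$ can be taken to be a continuous, strictly increasing function of value, with some $v^*$ the marginal value whose pre-message bid equals $t$. Applying Payoff Equivalence within the continuation sub-games under $\mu_{t^+}$ and $\mu_{t^-}$—both assortatively efficient under the contradiction hypothesis—I would show that the post-message bid function $\beta_{\mu_{t^-}}$ lies strictly above $\beta_{\mu_{t^+}}$, since expecting weakly fewer items remaining at every level makes each type strictly prefer to enter earlier. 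Let $v^{**}$ be the corresponding post-message marginal value; strict FOSD worsening then yields $v^{**}<v^*$. Every outside agent with $v\in(v^{**},v^*]$ strictly prefers to rush at $t$ rather than wait. Under the contradiction hypothesis, outside agents' values are i.i.d.\ from $F$ truncated below a history-dependent threshold, and since $F$ is atomless with full support, with positive probability at least two outside agents have values in $(v^{**},v^*]$. They rush simultaneously at $t$; with positive probability under $\mu_{t^-}$ the number of remaining items is strictly less than the number of rushers, producing a decisive tie that contradicts the first step.

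The main obstacle is establishing the strict upward shift of the equilibrium bid function after strict FOSD worsening of beliefs, because the continuation game after $t^-$ may carry further messages and is not literally a static sealed-bid game. My plan to handle this is to apply Payoff Equivalence within each continuation sub-game—sidestepping any explicit characterization of potentially complex continuation equilibria—and thereby reduce the question to comparing envelope formulas for interim payoffs under $\mu_{t^+}$ and $\mu_{t^-}$. A revealed-preference/envelope argument then delivers strict monotonicity of the bid shift on a positive-measure interval of values, which is precisely what the rush argument requires.
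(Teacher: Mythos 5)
Your proposal runs on the same engine as the paper's proof: argue by contradiction, apply the Payoff Equivalence theorem to the paused continuation games at $t^+$ and $t^-$ to pin down each type's payment conditional on winning without characterizing the (potentially complex) continuation equilibria, use condition (ii) of \cref{def:sudden-bad-news} to place the marginal remaining type's required payment at exactly $t$, and use strict FOSD worsening to push the post-news required payment strictly above $t$. The paper stops right there: a type still in the game with positive winning probability would have to pay more than $t$ conditional on winning, but in a queue the payment \emph{is} the entry time and the clock has already reached $t$, so payoff equivalence is violated and the allocation cannot be assortatively efficient. You instead push one step further and try to realize the rush explicitly—an interval $(v^{**},v^*]$ of types all entering at $t$, at least two of them tying, and uniform tie-breaking destroying assortativity.

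That extra step is the one soft spot. The claim that ``every outside agent with $v\in(v^{**},v^*]$ strictly prefers to rush at $t$'' is a best-response statement about the continuation game, and the machinery you set up does not deliver it: payoff equivalence tells you what payments \emph{must} be \emph{if} the allocation is assortatively efficient; it does not tell you that entering at $t$ dominates entering later, which would require knowing the other agents' continuation strategies—exactly what the envelope approach was meant to sidestep. It also obliges you to verify separately that the tie is decisive with positive probability under $\mu_{t^-}$. The repair is to stop one step earlier, where the contradiction is already in hand from what you have established (required payment $>t$ versus feasible payment $\le t$). One further elision worth noting: you assert that the marginal value $v^*$ at $t^+$ has pre-message bid exactly equal to $t$; the paper proves this with a maximality argument (if the highest remaining type's required payment were strictly below $t$, continuity of $t_{AE}$ and condition (ii) would produce a slightly higher type who could not yet have entered, contradicting maximality). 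You correctly identify condition (ii) as the relevant ingredient but should spell out that step.
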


Intuitively, sudden bad news reveals that the good is scarcer than previously believed, driving agents to bid more aggressively (join the queue earlier). However, since the clock is already at $\tau$, high-value agents cannot bid more than $\tau$, and some are forced to pool by bidding $\tau$. 
To formalize the intuition, we consider the game immediately after sudden bad news is revealed, and show that conditional payments in the paused queueing game are not equal to conditional payments in an assortatively efficient mechanism. Then by the Payoff Equivalence theorem, the allocation of the queueing game cannot be assortatively efficient.

To prove \cref{thm:FOSD-assortative-efficiency}, we first develop a series of lemmas. Fix information policy $p$. By the revelation principle, any equilibrium of $\mathcal{Q}(p)$ corresponds to an incentive-compatible direct mechanism. We will use the Payoff Equivalence theorem to compare the equilibrium payments of $\mathcal{Q}(p)$ to the equilibrium payments of an assortatively efficient mechanism. We first formally demonstrate that the Payoff Equivalence theorem holds when there is supply and demand uncertainty. 

\begin{lemma}\label{RET:supply_demand_uncertain}[Payoff Equivalence Theorem]
    Fix agents' beliefs about the number of remaining items and agents. Then any two mechanisms with the same allocation rule have the same expected payments for all types.
\end{lemma}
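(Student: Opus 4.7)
The plan is to run the standard one-agent envelope argument, observing that the additional uncertainty about the state (number of remaining items and still-active agents) is inert once we condition on the agent's fixed beliefs. Let $\omega$ parameterize the state, drawn from those beliefs, and for a given direct mechanism let $\hat q_i(v_i, v_{-i}, \omega)$ and $\hat p_i(v_i, v_{-i}, \omega)$ denote agent $i$'s allocation probability and expected payment as functions of the reports and the state. Define the interim quantities
\[
Q_i(v) = \mathbb{E}_{v_{-i},\omega}\!\left[\hat q_i(v, v_{-i}, \omega)\right], \qquad P_i(v) = \mathbb{E}_{v_{-i},\omega}\!\left[\hat p_i(v, v_{-i}, \omega)\right],
\]
with the expectation taken under the agent's fixed beliefs. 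Because the two mechanisms share the same allocation rule, they induce the same interim $Q_i(\cdot)$.

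First I would invoke incentive compatibility: the interim utility $U_i(v) := v\, Q_i(v) - P_i(v)$ satisfies $U_i(v) \ge v\, Q_i(v') - P_i(v')$ for every alternative report $v'$. The standard envelope argument then yields that $Q_i$ is nondecreasing, that $U_i$ is convex and Lipschitz, and therefore differentiable almost everywhere with $U_i'(v) = Q_i(v)$. Integrating gives
\[
U_i(v) = U_i(0) + \int_0^v Q_i(s)\,ds,
\]
which pins down $U_i$—and hence $P_i = v\, Q_i - U_i$—up to the single constant $U_i(0)$.

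Finally I would pin down $U_i(0)$. In the queueing setting an agent of type $0$ values the good at $0$ but strictly dislikes waiting, so in the equilibrium underlying the direct mechanism such an agent simply does not enter the queue, yielding $U_i(0) = 0$ in either mechanism. Consequently the two mechanisms deliver identical interim utilities and therefore identical interim expected payments for every type. The main subtlety, rather than a genuine obstacle, is checking that the envelope calculation commutes with the expectation over $\omega$; this is immediate because $\omega$ is drawn from a distribution that does not respond to reports, so $Q_i$ and $P_i$ are ordinary one-variable functions of $v$ and the textbook payoff equivalence argument transfers verbatim.
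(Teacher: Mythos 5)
Your proposal is correct and follows essentially the same route as the paper: both average the state-dependent allocation over the fixed belief distribution (which is inert to reports) to obtain a single interim winning probability, apply the standard envelope/payoff-equivalence argument, and pin down the constant by noting that type $0$ obtains zero utility and makes zero payment. Your version merely spells out the envelope steps that the paper cites as standard.
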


We use \cref{RET:supply_demand_uncertain} to characterize a symmetric equilibrium bidding function for a discriminatory price auction when agents are uncertain about the number of items remaining and the number of other agents.\footnote{One consequence of \cref{RET:supply_demand_uncertain} is that if an equilibrium of queueing game $\mathcal{Q}(p)$ has the same allocation rule as an equilibrium of queueing game $\mathcal{Q}(p')$, then the two equilibria have the same expected payments for all types.} Once the game begins, agents form beliefs about the remaining supply and demand, pinned down by their beliefs about the number of objects remaining. Since our main proof technique considers intermediate stages of the game, it is useful to establish various properties when there is demand and supply uncertainty. In this case, the equilibrium bidding function is given by the weighted average of the bidding functions in \cref{discrimbid}. However, the weighted average is not just over the probability the state occurs, but also over the probability the agent wins in that state. 

\begin{proposition}\label{prop:dpa_bid}
    Consider a discriminatory auction with unit-demand and uncertainty about the number of items available and the number of bidders. There exists an assortatively efficient equilibrium in which agents bid the weighted average of the equilibrium bidding functions for discriminatory price auctions with known supply and demand, given by \cref{bid_func:sup_dem_uncertain}.
\end{proposition}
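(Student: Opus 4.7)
The plan is to exploit Lemma~\ref{RET:supply_demand_uncertain} to pin down the interim expected payment, recognize that the weighted average stated in the proposition is exactly the candidate first-price bid that implements this payment, and then verify best-response by direct differentiation.

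Let $\theta=(n',k')$ index states, with common prior $\pi(\theta)$ over the number of remaining bidders and items; let $\beta_\theta(v)=\E[Y^{(n'-1)}_{k'}\mid Y^{(n'-1)}_{k'}<v]$ be the known-supply-and-demand bid from \cref{discrimbid}; and set $G(v):=\sum_\theta \pi(\theta)\, F^{(n'-1)}_{k'}(v)$, which is the probability that a bidder of value $v$ wins when all other bidders use any common strictly increasing strategy. The first step is a bookkeeping reduction: using $\beta_\theta(v)\,F^{(n'-1)}_{k'}(v)=\int_0^v x\,f^{(n'-1)}_{k'}(x)\,dx$, the numerator of the weighted average in the proposition's formula collapses to $\int_0^v x\,G'(x)\,dx$, so the candidate bid can equivalently be written
\[
b^*(v)\;=\;\frac{\int_0^v x\,G'(x)\,dx}{G(v)}\;=\;\E[X\mid X<v],\qquad X\sim G.
\]
In particular $b^*$ is strictly increasing in $v$ and satisfies $b^*(v)<v$, so if every bidder plays $b^*$ the allocation in each realized state is assortative.

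Next I would appeal to Lemma~\ref{RET:supply_demand_uncertain} to verify the formula. Because the allocation induced by everyone playing $b^*$ coincides with the efficient allocation, payoff equivalence forces the interim expected payment of type $v$ to equal that of the efficient direct mechanism, which by the envelope identity is $P(v)=vG(v)-\int_0^v G(x)\,dx=\int_0^v x\,G'(x)\,dx$. In a discriminatory auction the expected payment of type $v$ playing $b^*$ is $b^*(v)\,G(v)$, and equating the two recovers precisely the expression for $b^*$ above, confirming that the weighted-average formula is the unique candidate for a symmetric assortatively efficient equilibrium.

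Finally I would verify best-response directly. Suppose all other bidders follow $b^*$; a deviation to $b^*(v')$ by a type-$v$ bidder yields expected utility $U(v,v')=G(v')\bigl(v-b^*(v')\bigr)$. Differentiating the identity $G(v)b^*(v)=\int_0^v x\,G'(x)\,dx$ gives $G(v)b^{*\prime}(v)=G'(v)\bigl(v-b^*(v)\bigr)$, and substituting into $\partial_{v'} U$ yields
\[
\partial_{v'} U(v,v')\;=\;G'(v')\bigl(v-b^*(v')\bigr)-G(v')b^{*\prime}(v')\;=\;G'(v')(v-v'),
\]
which is positive for $v'<v$ and negative for $v'>v$, so truthful play is uniquely optimal.

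The main obstacle is the algebraic step that rewrites $\sum_\theta \pi(\theta) F^{(n'-1)}_{k'}(v)\beta_\theta(v)$ as $\int_0^v x\,G'(x)\,dx$; once that identification is made, everything else reduces to a standard first-price envelope computation combined with Lemma~\ref{RET:supply_demand_uncertain}. A minor care point is that $G'(v)>0$ on the support (needed for strict monotonicity of $b^*$ and for the strict-inequality conclusion in the FOC), which holds because $F$ is atomless with full support and $\pi$ has at least one active state.
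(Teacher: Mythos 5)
Your proof is correct, and it rests on the same engine as the paper's --- the Payoff Equivalence result of Lemma~\ref{RET:supply_demand_uncertain} applied to the assortatively efficient allocation with win probability $G(v)=\sum_{\theta}\pi(\theta)F^{(n'-1)}_{k'}(v)$ --- but it travels a somewhat different route. The paper obtains the interim payment by exhibiting a concrete reference mechanism, the uniform-price auction with truthful bidding (which remains an equilibrium under supply/demand uncertainty), and then equates $G(v)\beta(v)$ to that payment; you instead go straight to the envelope identity $P(v)=\int_0^v x\,G'(x)\,dx$ and observe that the weighted-average formula collapses to $b^*(v)=\E[X\mid X<v]$ with $X\sim G$, which is a cleaner packaging of the same computation. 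The more substantive difference is your final step: the paper's proof only shows that \emph{if} a symmetric increasing equilibrium exists it must equal the stated weighted average (it begins ``Let $\beta$ be a symmetric, increasing equilibrium''), whereas the proposition asserts existence. Your direct best-response check, $\partial_{v'}U(v,v')=G'(v')(v-v')$, closes exactly that gap, so your argument is in this respect more complete than the one in the paper. The only small points worth tightening are that deviations outside the range of $b^*$ must be dismissed (bids above $b^*(\bar v)$ are dominated since ties have probability zero under atomless $F$), and that in the queueing application every state in the support of $\pi$ has more remaining bidders than items, which is what guarantees $G'>0$ on the interior of the support.
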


The proof and exact bidding function (\cref{bid_func:sup_dem_uncertain}) are in the appendix. We use \cref{prop:dpa_bid} to characterize transfers in the queueing game with more complex information policies.

\begin{definition}
    Suppose agents have beliefs $\mu$ about the number of items remaining and suppose agents have values drawn independently and identically from $F|_{[0, \hat{v}]}$. Let $t_{AE}(v | \mu, \hat{v})$ denote the expected payment of type $v$ conditional on winning in an assortatively efficient mechanism. Let $t_{Q}(v | \mu, \hat{v}, p)$ denote the same for an equilibrium of the queueing game with information policy $p$. 
\end{definition}

We suppress $p$ in the notation for $t_Q$ when clear.

\begin{lemma}\label{lem:t_AE-continuous}
    The expected transfer of the highest remaining type in an assortatively efficient mechanism, $v \mapsto t_{AE}(v | \mu, v)$, is continuous and strictly increasing.
\end{lemma}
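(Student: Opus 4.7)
The plan is to apply Payoff Equivalence (\cref{RET:supply_demand_uncertain}) to obtain a closed-form expression for $t_{AE}(\hat v \mid \mu, \hat v)$ as a weighted expectation of order statistics of the truncated distribution $F|_{[0,\hat v]}$, and then read off continuity and strict monotonicity from elementary properties of that truncation.

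First, I would fix $\mu$ and write $n'$ for the number of agents with values drawn from $F|_{[0,\hat v]}$. Conditioning on the state in which exactly $k$ items remain, assortative efficiency assigns the good to the top $k$ types, so by \cref{RET:supply_demand_uncertain} the expected payment of type $v$ conditional on winning must coincide with that of the discriminatory-price auction of \cref{discrimbid}: the conditional expectation of the $k$-th largest of $n'-1$ i.i.d.\ draws from $F|_{[0,\hat v]}$, given that it is strictly less than $v$. At $v = \hat v$ this conditioning event has probability one (since $F$ is atomless), and because the highest type wins in every state, no reweighting by winning probabilities is needed across states. Denoting by $Z_k(\hat v)$ the $k$-th order statistic of $n'-1$ i.i.d.\ draws from $F|_{[0,\hat v]}$, this yields
$$
t_{AE}(\hat v \mid \mu, \hat v) \;=\; \sum_{k} \mu(k)\, \E\bigl[Z_k(\hat v)\bigr].
$$

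Next, I would analyze the right-hand side directly. The truncated CDF $x \mapsto F(x)/F(\hat v)$ depends continuously on $\hat v \in (0, \bar v]$, and the order-statistic CDF is a polynomial in it; dominated convergence then gives continuity of each $\hat v \mapsto \E[Z_k(\hat v)]$, and hence of the sum. For strict monotonicity, if $\hat v' > \hat v$ then full support of $F$ yields $F(x)/F(\hat v') < F(x)/F(\hat v)$ on $(0,\hat v)$, so $F|_{[0,\hat v']}$ strictly first-order stochastically dominates $F|_{[0,\hat v]}$; this strict dominance is inherited by every order statistic, so $\E[Z_k(\hat v')] > \E[Z_k(\hat v)]$. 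Summing over $\mu$ with nonnegative weights preserves both properties.

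The chief obstacle is bookkeeping: one must be careful with the weighting across states when invoking Payoff Equivalence. At $v = \hat v$ this is automatic because the winning probability equals one in every state, but for arbitrary $v < \hat v$ (not needed here) one would have to weight by interim winning probabilities in the style of \cref{prop:dpa_bid}. The remaining steps — continuous dependence and strict FOSD inheritance under truncation — are standard and rely only on the atomlessness and full support of $F$ assumed in \cref{Model}.
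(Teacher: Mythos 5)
Your proposal is correct and follows essentially the same route as the paper's proof: both invoke Payoff Equivalence to express $t_{AE}(v\mid\mu,v)$ as a $\mu$-weighted sum of expected $k$-th order statistics of the truncated distribution $F|_{[0,v]}$ (with the conditioning trivial because the top type wins with probability one), then derive continuity from the continuous dependence of the truncated CDF on $v$ and strict monotonicity from the effect of raising the truncation point. The only cosmetic difference is that the paper verifies continuity by an explicit integration by parts rather than dominated convergence, and it lets the number of remaining bidders vary with the number of remaining items via the joint distribution $h(n,k)$, which your fixed $n'$ glosses over but which does not affect the argument.
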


\cref{lem:t_AE-continuous} says that if we fix beliefs $\mu$ and values are drawn independently and identically from some $F|_{[0, v]}$, then the expected transfer in an assortatively efficient mechanism for an individual with the highest possible type $v$ is continuous in $v$. Note that changing $v$ also changes the support over which values are drawn. The next lemma describes how expected payments in an assortatively efficient mechanism change as beliefs change.

\begin{lemma}\label{lem:t_AE-FOSD-dominance} 
    Suppose that belief $\mu$ strictly FOSD dominates $\mu'$. Then $t_{AE}(v | \mu', v) > t_{AE}(v |\mu, v)$ for all $v$.
\end{lemma}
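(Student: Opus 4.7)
The plan is to use payoff equivalence to obtain a closed-form expression for $t_{AE}(v|\mu, v)$, and then reduce the claim to strict monotonicity of order-statistic expectations in the number of remaining items. First, by \cref{RET:supply_demand_uncertain} and \cref{prop:dpa_bid}, the expected payment of type $\hat{v} = v$ (the highest possible type in $F|_{[0,v]}$) in any assortatively efficient mechanism equals her equilibrium expected payment in the discriminatory price auction with belief $\mu$ over remaining items; beliefs over the number of remaining bidders are deterministically pinned down by $k' \mapsto n - k + k'$. Since the top type wins with probability one in every state, the win-probability weights in \cref{prop:dpa_bid} collapse to $\mu(k')$, and since $F|_{[0,v]}$ is atomless the conditioning event $\{Y^{(n-k+k'-1)}_{k'} < v\}$ in \cref{discrimbid} has probability one. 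Thus
\begin{equation*}
    t_{AE}(v|\mu, v) \;=\; \sum_{k'=1}^{k} \mu(k')\, g(k'), \qquad g(k') \;:=\; \mathbb{E}\bigl[Y^{(n-k+k'-1)}_{k'}\bigr],
\end{equation*}
where the order statistics are of iid draws from $F|_{[0,v]}$.

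Second, I would show that $g$ is strictly decreasing via a coupling. Let $X_1, \ldots, X_{n-k+k'}$ be iid from $F|_{[0,v]}$. Writing $X_{(r)}$ for the $r$-th largest of the first $n - k + k' - 1$ draws and $X'_{(r)}$ for the $r$-th largest of all $n - k + k'$ draws, we have $g(k') = \mathbb{E}[X_{(k')}]$ and $g(k'+1) = \mathbb{E}[X'_{(k'+1)}]$. On the event $A = \{X_{n-k+k'} > X_{(k')}\}$, the new draw pushes $X_{(k')}$ down one rank, so $X'_{(k'+1)} = X_{(k')}$. On $A^c$, the top $k'$ ranks of the augmented sample coincide with $X_{(1)}, \ldots, X_{(k')}$, so $X'_{(k'+1)}$ is either $X_{n-k+k'}$ or $X_{(k'+1)}$ (the latter, if it exists), both of which are strictly below $X_{(k')}$ almost surely by atomlessness. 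Since $n > k$ gives $\Pr(A^c) = (n-k)/(n-k+k') > 0$, we conclude $g(k'+1) < g(k')$.

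Finally, strict FOSD of $\mu$ over $\mu'$ combined with strict monotonicity of $g$ yields $\mathbb{E}_\mu[g(K)] < \mathbb{E}_{\mu'}[g(K)]$ (Abel summation on the CDF gap, with each increment $g(k) - g(k+1) > 0$ multiplied by the nonnegative gap $F_{\mu'}(k) - F_\mu(k)$, which is strictly positive for some $k$), delivering $t_{AE}(v|\mu, v) < t_{AE}(v|\mu', v)$. The main obstacle is Step 1: the reduction to a clean weighted average of order-statistic means relies on both payoff equivalence and the simplifications that the top type wins almost surely in every state and that the conditioning in $\beta_{k'}$ is trivial under atomlessness. Steps 2 and 3 are then routine.
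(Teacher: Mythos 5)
Your proposal is correct and follows essentially the same route as the paper: both reduce $t_{AE}(v\mid\mu,v)$ to the $\mu$-weighted average $\sum_{\kappa}\mu(\kappa)\,\E[Y_{\kappa}^{(n-k+\kappa-1)}]$ (the paper via the uniform-price auction and payoff equivalence, you via \cref{prop:dpa_bid}, with the same collapse of the win-probability weights and trivial conditioning for the top type), then show strict monotonicity in $\kappa$ by the identical add-one-draw coupling, and finish with strict FOSD. Your treatment of strictness via $\Pr(A^c)=(n-k)/(n-k+k')>0$ is slightly more explicit than the paper's footnote, but the argument is the same.
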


Because $\mu$ FOSD dominates $\mu'$, there are fewer items available under $\mu'$. As a result, an agent with the highest type must pay more conditional on winning (and always wins) in equilibrium. In the equilibrium of a discriminatory price auction, such an agent would bid more aggressively. 

We use $t_{AE}$ to pin down the time at which each type must join the queue, given their beliefs as the game progresses.

\begin{lemma}\label{lem:transfer-determines-entry-time}
    Suppose $\mathcal{Q}(p)$ has an assortatively efficient equilibrium. In this equilibrium, type $v$ attempts to enter the queue at some $\tilde{t}' \in \{t^+, t^-, t^{--}, ...\}$ if and only if $t_{AE}(v | \mu_{\tilde{t}}, v) = t$ for some $\tilde{t} \in \{t^+, t^-, t^{--}, ...\}$.
\end{lemma}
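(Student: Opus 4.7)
The plan is to combine assortative efficiency with the Payoff Equivalence Theorem (\cref{RET:supply_demand_uncertain}) to pin down each type's entry time. First, I would show that in any assortatively efficient equilibrium, types must enter the queue in strictly decreasing order of value. This is essentially forced by AE itself: if some $v$ had not yet entered while a strictly lower type $v' < v$ had, then on the positive-probability event that the remaining opponents' values all fall in $(v', v)$, the low type $v'$ would be allocated an item while $v$ would not, contradicting AE. Consequently, at the moment type $v$ attempts to enter, every strictly higher type already sits in the queue, and the remaining opponents' types are i.i.d.\ from $F|_{[0,v]}$.

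Next, I would view the continuation game at that moment as a fresh mechanism played by agents with types from $F|_{[0,v]}$ under supply belief $\mu_{\tilde{t}'}$. Since the original equilibrium is AE, so is this continuation, and in it $v$ is the top type, so he wins with probability one. In the queue (a discriminatory auction), the winner pays his bid, which here is his entry time $t$. The Payoff Equivalence Theorem then forces this payment to coincide with $t_{AE}(v\mid\mu_{\tilde{t}'},v)$, yielding $t = t_{AE}(v\mid\mu_{\tilde{t}'},v)$. This immediately gives the forward direction by setting $\tilde{t}=\tilde{t}'$.

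For the converse, suppose $t_{AE}(v\mid\mu_{\tilde{t}},v)=t$ at some round $\tilde{t}$ at real time $t$. By AE, type $v$ does enter at some real time $t^{\star}$ at some round $\tilde{t}^{\star}$, and the forward direction gives $t^{\star}=t_{AE}(v\mid\mu_{\tilde{t}^{\star}},v)$. To conclude $t^{\star}=t$, I would invoke strict monotonicity of $v\mapsto t_{AE}(v\mid\mu,v)$ from \cref{lem:t_AE-continuous} together with the FOSD monotonicity in $\mu$ from \cref{lem:t_AE-FOSD-dominance}: any discrepancy $t^{\star}\ne t$ would yield two distinct AE-consistent bids for the same top-remaining type $v$ along the sequence of nested rounds, contradicting the uniqueness of the AE-pinned entry time.

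The main obstacle, I expect, is the nested-round bookkeeping at a single clock reading. Beliefs can update several times at the same real time $t$ as messages and entries alternate, which is precisely why the lemma must allow $\tilde{t}$ and $\tilde{t}'$ to differ. The delicate step is to verify that whenever the identity $t_{AE}(v\mid\mu_{\tilde{t}},v)=t$ first becomes satisfied within the inner sequence of rounds at real time $t$, the equilibrium strategy of type $v$ fires at the same real time (possibly at a later inner round $\tilde{t}'$ after a subsequent message or a competing bid), and symmetrically that an entry at round $\tilde{t}'$ can always be matched with a belief-round $\tilde{t}$ at which the identity holds.
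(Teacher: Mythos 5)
Your forward direction is essentially the paper's argument: in an assortatively efficient equilibrium the type that enters must be the highest type still in the game, it therefore wins with probability one, its payment conditional on winning is its entry time, and the Payoff Equivalence Theorem (\cref{RET:supply_demand_uncertain}) forces that payment to equal $t_{AE}(v\mid\mu_{\tilde{t}'},v)$. That half is fine and matches the paper.

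The reverse direction has a genuine gap. You reduce it to ``uniqueness of the AE-pinned entry time,'' but that uniqueness is precisely what must be proved, and it does not follow from the lemmas you cite: \cref{lem:t_AE-continuous} gives strict monotonicity in $v$ at \emph{fixed} beliefs, while \cref{lem:t_AE-FOSD-dominance} compares transfers only across FOSD-\emph{ordered} beliefs, and $\mu_{\tilde{t}}$ and $\mu_{\tilde{t}^{\star}}$ at two different clock readings need not be FOSD-comparable (this lemma does not assume sudden bad news, so beliefs may move in either direction over time). Hence the two identities $t=t_{AE}(v\mid\mu_{\tilde{t}},v)$ and $t^{\star}=t_{AE}(v\mid\mu_{\tilde{t}^{\star}},v)$ with $t\neq t^{\star}$ are not by themselves contradictory. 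The paper closes the converse by two steps your sketch is missing. First, the hypothesis $t_{AE}(v\mid\mu_{\tilde{t}},v)=t$ already forces $v$ to be the highest type remaining at $\tilde{t}$: if some $\hat{v}>v$ were still present, strict monotonicity in the type would give $t_{AE}(\hat{v}\mid\mu_{\tilde{t}},\hat{v})>t$, yet an agent still in the game can never pay more than the current clock reading. Second, once $v$ is the top remaining type, payoff equivalence pins its expected payment conditional on winning at exactly $t$; since in the queue that payment \emph{is} the entry time, entering at any $t-\delta<t$ would make the realized payment strictly less than $t$, a contradiction, so $v$ must enter now. This mechanism-specific observation (payment equals bid equals entry time) is what actually delivers the converse, and it is what your appeal to uniqueness is standing in for without justification.
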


In an assortatively efficient equilibrium, an agent's payment conditional on winning is determined by his beliefs, his value, and the highest type remaining in the game. This in turn determines the time at which he attempts to enter the queue, because that time is also his payment conditional on winning.

We use \cref{RET:supply_demand_uncertain}, \cref{lem:t_AE-continuous}, \cref{lem:t_AE-FOSD-dominance}, and \cref{lem:transfer-determines-entry-time} to show that if information policy $p$ has a positive probability of sudden bad news, then transfers in $\mathcal{Q}(p)$ cannot be equal to transfers in an assortatively efficient mechanism. To prove \cref{thm:FOSD-assortative-efficiency}, we assume that $\mathcal{Q}(p)$ results in an assortatively efficient allocation but also reveals sudden bad news. We then show that after sudden bad news, the highest type remaining in the game would need to pay more than the current clock time. 

\begin{proof}[Proof of \cref{thm:FOSD-assortative-efficiency}]

Fix an equilibrium of $\mathcal{Q}(p)$ with a positive probability of revealing sudden bad news. For the sake of contradiction, assume that the equilibrium is assortatively efficient. Suppose sudden bad news is revealed at time $t$.

Consider two paused versions of $\mathcal{Q}(p)$: (i) the game immediately before sudden bad news is revealed, when agents have beliefs $\mu_{t^+}$ and (ii) the game immediately after, when agents have beliefs $\mu_{t^-}$.\footnote
{
More precisely, for (i), pause the game at time $t$ and ask the agents their beliefs before they know whether or not the announcer sent a message at time $t$. For (ii), tell agents the announcer's message (if there was one) and again elicit their beliefs. 
} 
Let $\hat{v}$ denote the highest type that could still be in the game before the message (determined by agents' equilibrium bidding strategies), so that agents have types drawn from $F|_{[0, \hat{v}]}$. Since the equilibrium is assortatively efficient, only one type can join in any resolution stage of the nested game. Thus after the message the remaining agents have types in $[0, \hat{v})$ or $[0, \hat{v}]$.

Because the allocation is assortatively efficient by assumption, the queueing game continued from any point must result in an assortatively efficient allocation. By Lemma \ref{RET:supply_demand_uncertain}, each type's expected transfer conditional on winning in the queueing game must be equal to their expected transfer conditional on winning in an assortatively efficient mechanism, given their respective beliefs. In notation, $t_Q(v | \mu, \hat{v}) = t_{AE}(v | \mu, \hat{v})$ for any $v$, $\mu$ and $\hat{v}$. 

We claim $t_{AE}(\hat{v} | \mu_{t^+}, \hat{v}) = t$. Assume not. 
Since the clock is already at time $t$, type $\hat{v}$ cannot pay more than $t$, so $t_{AE}(\hat{v} | \mu_{t^+}, \hat{v}) = t_{Q}(\hat{v}|\mu_{t^+},\hat{v}) \leq t$. Hence $t_{AE}(\hat{v} | \mu_{t^+}, \hat{v}) < t$. By \cref{lem:t_AE-continuous}, we may choose $v' > \hat{v}$ sufficiently close to $\hat{v}$ such that $t_{AE}(v' | \mu_{t^+}, v') < t$.\footnote{Note that the definition of sudden bad news implies $\hat{v} < \overline{v}$ so such a type exists.} We now show $v'$ cannot have already entered the queue, which contradicts the maximality of $\hat{v}$. By the definition of sudden bad news and \cref{lem:t_AE-FOSD-dominance}, for all $t' > t$ we know $t_{AE}(v' | \mu_{t'}, v') \leq t_{AE}(v' | \mu_{t^+}, v') < t < t'$.\footnote{Intuitively, our theorem condition states that these are the most pessimistic beliefs agents have had thus far in the game, hence all previous expected payments (conditional on winning) were lower than the current expected payment (conditional on winning).} Then by \cref{lem:transfer-determines-entry-time}, type $v'$ cannot have entered the queue before time $t$ and is still in the game. This contracts the maximality of $\hat{v}$. We conclude that $t_{AE}(\hat{v} | \mu_{t^+}, \hat{v}) = t$.

By \cref{lem:t_AE-FOSD-dominance}, $t_{AE}(\hat{v} | \mu_{t^-}, \hat{v}) > t_{AE}(\hat{v} | \mu_{t^+}, \hat{v}) = t$. By the continuity of $t_{AE}$, there exists some type $v < \hat{v}$ with $t_{AE}(v | \mu_{t^-}, \hat{v}) > t$. That means type $v$ is still in the game but satisfies
\begin{equation*}
    t_Q(v | \mu_{t^-}, \hat{v}) = t_{AE}(v | \mu_{t^-}, \hat{v}) > t.
\end{equation*}

A type-$v$ agent has positive probability of winning the item, and if she does, she must pay $ t_Q(v | \mu_{t^-}, \hat{v}) > t$ in expectation. Since the game clock has already reached $t$, her payment cannot exceed $t$, a contradiction.
\end{proof}

We have presented many examples of information policies that are assortatively inefficient, but it is also possible to have (non-trivial) information policies that maintain assortative efficiency. As suggested by \cref{thm:FOSD-assortative-efficiency}, the driving force behind assortative inefficiency is the instantaneous release of bad news, which causes a mass of agents to rush to join. An information policy can release bad news continuously and never cause a rush. 
\cref{ex:continuous-bad-news-info-policy} considers an information policy that $(1)$ provides bad news continuously and $(2)$ can only make an announcement that lowers agents' beliefs about the length of the queue.

\begin{example}[Continuous Bad News]\label{ex:continuous-bad-news-info-policy}
  Let $n=3$ and $k=2$. Suppose agents have values distributed independently from $U([0,1])$. Fix $\tau =2/9$. Consider information policy $p$ that (i) privately draws a time $\tau' \sim U([0,\tau])$ and (ii) at time $\tau'$ sends message $m$ if the queue was empty at time $\tau$, and otherwise sends no message.
\end{example}

Consider the game with information policy defined by \cref{ex:continuous-bad-news-info-policy}. As time passes and a message does not arrive, agents continuously update their beliefs about the queue length, with increasingly pessimistic views about the probability that the queue was empty at time $\tau$. If a message arrives, it signifies good news and relaxes entry functions.

\begin{proposition}\label{prop:cont-bad-news}
    Let $p$ be the information policy from \cref{ex:continuous-bad-news-info-policy}. Then $\mathcal{Q}(p)$ has an assortatively efficient equilibrium.
\end{proposition}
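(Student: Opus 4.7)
The plan is to construct an explicit symmetric assortatively efficient equilibrium of $\mathcal{Q}(p)$ with a three-regime strategy keyed to the agent's value (high vs.\ low) and the message history (message received vs.\ not). Let $v^* = 1/2$ be the value at which the standard no-information discriminatory-auction bid $\beta_0(v) = \frac{v(3-2v)}{3(2-v)}$ equals $\tau = 2/9$. In the conjectured equilibrium, agents with $v \geq v^*$ enter at the standard time $\beta_0(v) \geq \tau$; because messages arrive only at times $\tau' \leq \tau$, no message can arrive before such an agent's entry, so the information policy has no effect on high types and the restriction to $[v^*, 1]$ is a standard discriminatory auction. Agents with $v < v^*$ enter at times in $[0, \tau)$ with strategies that depend on whether a message has been observed.

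For a low type observing a message at time $\tau'$, the event $A$ (``queue empty at $\tau$'') is revealed, and the continuation is a sub-game over agents not yet entered, with values drawn from a truncated uniform distribution. I would set the post-message bid $\beta^{A}(v, \tau')$ equal to the symmetric assortatively efficient equilibrium bid of this reduced discriminatory auction. For a low type who has not observed a message by the current time, the pre-message bid $\beta^{\bar A}(v)$ would be defined implicitly through \cref{RET:supply_demand_uncertain} (Payoff Equivalence): averaged over all message-arrival realizations and competitors' types, the expected payment conditional on winning must equal the assortatively efficient mechanism's payment $\beta_0(v)$. Because the agents' beliefs about $A$ evolve continuously as $t$ decreases (the defining feature of continuous bad news, with $P(A\mid\text{no msg by }t)$ smooth in $t$), $\beta^{\bar A}$ can be chosen strictly increasing and smooth in $v$.

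Completing the proof requires three verifications: (i) the overall bid is strictly increasing in $v$, so the allocation is assortatively efficient; (ii) the strategy is a best response at every history and time, verified by local first-order conditions made tractable by the smooth belief dynamics; and (iii) for every $\tau' \in [0, \tau]$, the post-message sub-game admits a clock-feasible assortatively efficient equilibrium. Step (iii) is the main obstacle: for small $\tau'$, a naive sub-game bid for the highest remaining low type could exceed $\tau'$, which would force pooling and break assortative efficiency. The pre-message strategy $\beta^{\bar A}$ must therefore be chosen so that by the time $\tau'$ is reached, enough high-value low types have already entered that both the set of remaining agents and the support of their values are shrunk sufficiently for the highest remaining type's sub-game bid to lie below $\tau'$. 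The particular parameters of \cref{ex:continuous-bad-news-info-policy} ($n = 3$, $k = 2$, $\tau = 2/9$, uniform values) are chosen so that this feasibility condition holds across the entire range $\tau' \in [0, \tau]$.
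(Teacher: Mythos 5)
Your construction coincides with the paper's: high types with $v \ge 1/2$ bid as in the no-information benchmark and are unaffected by the policy; the post-message bid $\beta^{A}$ is the assortatively efficient discriminatory-auction bid for the revealed population; and the no-news bid $\beta^{\bar A}$ is pinned down implicitly by Payoff Equivalence so that each type's expected payment conditional on winning equals the assortatively efficient benchmark $\beta_0(v)$ times the win probability. One minor point before the main one: what you flag as the principal obstacle, your step (iii), is in fact automatic. The message is \emph{good} news (it reveals the queue was empty at $\tau$), so post-message bids drop pointwise below no-news bids, $\beta^{A}(v) < \beta^{\bar A}(v)$; hence the highest type still present when a message arrives at $\tau'$ is prescribed a bid strictly below $\beta^{\bar A}$ of that type, which equals $\tau'$. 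No delicate tuning of the parameters is needed for clock feasibility, and indeed the paper's choice $\tau = 2/9$ is for algebraic tractability rather than feasibility.

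The genuine gap is in your step (ii), which is where essentially all of the work lies: ``local first-order conditions'' will not dispose of the deviations that matter. Deviations that mimic a fixed type $v'$ throughout are ruled out for free once you know the profile replicates the allocation and transfers of an incentive-compatible assortatively efficient mechanism; no FOC is needed. The dangerous deviations are \emph{composite}: a type-$v$ agent bids $\beta^{\bar A}(v')$ for some $v' \ne v$ in the no-news regime and then, if a message arrives before she has entered, re-optimizes to her own best post-message continuation rather than following type $v'$ thereafter. Such a strategy is not the strategy of any type, so payoff equivalence does not apply to it, and the resulting deviation payoff is not the interim payoff of a one-dimensional misreport, so a local envelope argument at $v' = v$ does not deliver global optimality. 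The paper must first characterize the optimal post-message behavior of an arbitrary deviator, then write out the full deviation payoffs for upward and downward deviations (the downward case requires integrating over the random message-arrival time, since the deviator's post-message bid depends on when the message lands) and sign these expressions globally by locating their roots on the relevant domain. Your proposal neither identifies this class of deviations nor supplies a method that would handle it, so as written the incentive-compatibility step would fail.
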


We choose $\tau = 2/9$ because the analysis is simpler for a fixed $\tau$.\footnote{We believe any $\tau\in(0, 1/3)$ will suffice, because $1/3$ is the earliest time an agent might join the queue.} \cref{prop:cont-bad-news} shows that substantive information policies can still yield assortative efficiency. 
Because \cref{ex:continuous-bad-news-info-policy} generates an assortatively efficient allocation, the total welfare is also the same as providing no information.
Our paper focuses on the effects of information policies on total welfare and assortative efficiency. As a result, using only the desiderata we consider, there is no distinction between different information policies that result in assortative efficiency. We therefore limit our analysis to a single example. 

\section{Welfare}\label{Optimality}

We next analyze how information affects total welfare, which can help predict the welfare impacts of information arriving from public sources such as social media and the internet. More importantly, our results may help an information designer increase total surplus with a well-designed information policy. Suppose the distributor of a good wishes to allocate a good using a queue instead of (for example) a lottery. This may be due to logistical concerns (a lottery may be more difficult to organize), concerns about allocation to very-high-need individuals (such as in the distribution of food aid), or concerns about the presence of speculators. In such settings, when an organization wishes to maintain a queue but is willing to release real-time information about that queue, our results can guide what information should be released.

\cref{thm:FOSD-assortative-efficiency} allows us to characterize some cases in which providing information increases or decreases total surplus. Intuitively, when information breaks assortative efficiency, it reduces the value generated by the allocation, but it also reduces total waiting times. The hazard rate of the value distribution determines which effect dominates the welfare analysis. If the distribution has a ``thick tail,'' maintaining assortative efficiency is more important; if the distribution has a ``thin tail,'' reducing waiting times is more important. Results similar to \cite{haro08} allow us to characterize when assortative efficiency maximizes or minimizes surplus. 

\begin{lemma}\label{lem:hazard-rate-efficiency}
    Consider the class of incentive compatible and individually rational money-burning mechanisms that allocate all items. If the hazard rate of the value distribution is weakly increasing (decreasing), then any mechanism with an assortatively inefficient allocation has a weakly higher (lower) total surplus than an assortatively efficient mechanism. If the hazard rate is strictly monotone, the inequality is strict.
\end{lemma}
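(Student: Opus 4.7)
The plan is to reduce total surplus in any IC, IR money-burning mechanism to a functional that depends only on the allocation rule, and then to compare an assortatively inefficient allocation against the assortatively efficient benchmark by a pointwise swap argument whose sign is controlled by the monotonicity of the hazard rate.

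First, I would fix an IC, IR money-burning mechanism that allocates all items, with ex-post allocation rule $X_i(\mathbf{v})$ and nonnegative payment $T_i(\mathbf{v})$. Since the bottom type wins with probability zero (as $F$ is atomless and all items are allocated to some positive-value agent a.s.) and money-burning precludes negative payments, IR pins down $U_i(0)=0$ for every agent. The envelope theorem then gives
\begin{equation*}
    \mathbb{E}[U_i(v_i)] \;=\; \int_0^{\bar v} x_i(s)\bigl(1-F(s)\bigr)\,ds \;=\; \mathbb{E}\!\left[\frac{X_i(\mathbf{v})}{h(v_i)}\right],
\end{equation*}
where $x_i$ is the interim allocation and $h=f/(1-F)$ is the hazard rate. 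Because payments are burned rather than transferred, total surplus equals $\sum_i\mathbb{E}[U_i(v_i)]$, giving the key identity
\begin{equation*}
    \mathrm{TS}(X) \;=\; \mathbb{E}\!\left[\sum_{i=1}^n \frac{X_i(\mathbf{v})}{h(v_i)}\right].
\end{equation*}

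Next, I would compare an assortatively inefficient allocation $X^M$ against the assortatively efficient allocation $X^{AE}$. Both satisfy $\sum_i X_i(\mathbf{v})=k$ in every realization, so the winning sets $S^M(\mathbf{v})=\{i:X^M_i=1\}$ and $S^{AE}(\mathbf{v})$ have equal cardinality. I would pair the elements of $S^M\setminus S^{AE}$ with those of $S^{AE}\setminus S^M$ via any (measurable) bijection $\ell\mapsto(i_\ell,j_\ell)$; since $v_{j_\ell}$ lies in the top $k$ and $v_{i_\ell}$ does not, atomlessness of $F$ gives $v_{j_\ell}>v_{i_\ell}$ almost surely. Consequently,
\begin{equation*}
    \mathrm{TS}(X^M) - \mathrm{TS}(X^{AE}) \;=\; \mathbb{E}\!\left[\sum_\ell \left(\frac{1}{h(v_{i_\ell})} - \frac{1}{h(v_{j_\ell})}\right)\right].
\end{equation*}
If $h$ is weakly increasing, then $1/h$ is weakly decreasing, each summand is nonnegative, and $\mathrm{TS}(X^M)\geq \mathrm{TS}(X^{AE})$; the inequality flips if $h$ is weakly decreasing. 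Under strict monotonicity of $h$, assortative inefficiency of $X^M$ forces $S^M\neq S^{AE}$ on an event of positive probability, on which the summand is strictly signed, yielding the strict inequality.

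The main obstacle is the reduction in Step 1: one must carefully justify $U_i(0)=0$ using money-burning together with IR, so that total surplus depends only on the allocation rule and not on interim rents a designer could otherwise grant. This rests on interpreting money-burning as prohibiting negative payments (no outside fund to subsidize agents). Once the identity $\mathrm{TS}(X) = \mathbb{E}[\sum_i X_i/h(v_i)]$ is established, the remainder reduces to a pointwise ranking of $1/h$ values on matched pairs, which is handled by the bijection argument above.
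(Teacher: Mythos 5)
Your proposal is correct and follows essentially the same route as the paper: reduce total surplus to $\mathbb{E}\bigl[\sum_i \vartheta(v_i)X_i(\mathbf{v})\bigr]$ with $\vartheta = (1-F)/f$, then observe that the assortatively efficient allocation pointwise maximizes (minimizes) this sum over allocations that exhaust supply when $\vartheta$ is increasing (decreasing). The only differences are presentational: the paper cites the surplus identity from Hartline and Roughgarden's Lemma 2.6 where you re-derive it via the envelope theorem, and your explicit bijection between the symmetric differences of the winning sets is just an unpacked version of the paper's pointwise-comparison step.
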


To our knowledge, no prior work has connected this result to information design for waiting-line auctions. \cref{lem:hazard-rate-efficiency}, together with \cref{thm:FOSD-assortative-efficiency}, yields \cref{thm:information-efficiency}.

\begin{theorem}\label{thm:information-efficiency}
    Let $\tilde{p}$ be the trivial information policy and let $p$ be any information policy. Suppose $\frac{1-F}{f}$ is weakly increasing (decreasing). Any equilibrium of $\mathcal{Q}(p)$ with positive probability of revealing sudden bad news yields weakly lower (higher) total surplus than the symmetric equilibrium of $\mathcal{Q}(\tilde{p})$. If the hazard rate is strictly monotone, the inequality is strict.
\end{theorem}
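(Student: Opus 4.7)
The plan is to obtain this theorem as an immediate corollary of three earlier results: Theorem \ref{thm:FOSD-assortative-efficiency} (sudden bad news forces assortative inefficiency), Corollary \ref{cor:trivial-info-policy} (the symmetric equilibrium of $\mathcal{Q}(\tilde{p})$ is assortatively efficient), and Lemma \ref{lem:hazard-rate-efficiency} (the hazard rate pins down whether an inefficient allocation beats or loses to the efficient one among IC/IR money-burning mechanisms that allocate all items). The substantive content has already been done; what remains is to check that the equilibria at hand are in the class to which Lemma \ref{lem:hazard-rate-efficiency} applies, and then to line up the inequalities with the appropriate direction of hazard-rate monotonicity.

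Concretely, I would first invoke Corollary \ref{cor:trivial-info-policy} to record that the symmetric equilibrium of $\mathcal{Q}(\tilde{p})$ delivers an assortatively efficient allocation, and invoke Theorem \ref{thm:FOSD-assortative-efficiency} to record that any equilibrium of $\mathcal{Q}(p)$ with positive probability of revealing sudden bad news delivers an assortatively inefficient allocation. Next I would view each equilibrium as a direct mechanism and verify the hypotheses of Lemma \ref{lem:hazard-rate-efficiency}: incentive compatibility follows from the revelation principle applied to equilibrium play; individual rationality holds because any agent can secure payoff $0$ by never attempting to join; the mechanism is money-burning because the ``payment'' (time spent in line) accrues to no one; and all $k$ items are allocated because $n>k$ and any agent with $v>0$ strictly prefers a successful entry (which always occurs before $t=0$ in equilibrium, since otherwise arriving at $t=0$ would be a strictly profitable deviation) to the outside option of $0$. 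In particular both equilibria lie in the class to which Lemma \ref{lem:hazard-rate-efficiency} applies, and by payoff equivalence (Lemma \ref{RET:supply_demand_uncertain}) the surplus of the trivial benchmark coincides with the surplus of ``the'' assortatively efficient mechanism referenced in that lemma.

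Finally I would assemble the conclusion. If $(1-F)/f$ is weakly increasing, then $f/(1-F)$ is weakly decreasing, so Lemma \ref{lem:hazard-rate-efficiency} yields that the assortatively inefficient equilibrium of $\mathcal{Q}(p)$ has weakly lower total surplus than the efficient equilibrium of $\mathcal{Q}(\tilde{p})$; the opposite direction is symmetric. Strictness transfers directly: strict monotonicity of $(1-F)/f$ is equivalent to strict monotonicity of the hazard rate, and Lemma \ref{lem:hazard-rate-efficiency} then gives a strict gap between the inefficient and efficient surplus. The main obstacle, such as it is, is bookkeeping rather than substance: one must be careful that the (randomized, dynamic) equilibrium of $\mathcal{Q}(p)$ really does induce a well-defined IC/IR money-burning mechanism whose allocation is the one identified as inefficient by Theorem \ref{thm:FOSD-assortative-efficiency}, and that ``all items allocated'' holds even when sudden bad news triggers a rush, so that Lemma \ref{lem:hazard-rate-efficiency} is genuinely applicable to the object being compared.
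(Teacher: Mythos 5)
Your proposal is correct and follows essentially the same route as the paper's own proof: combine \cref{cor:trivial-info-policy} and \cref{thm:FOSD-assortative-efficiency} to classify the two equilibria as assortatively efficient versus inefficient, pass to direct mechanisms via the revelation principle, and apply \cref{lem:hazard-rate-efficiency} with the monotonicity of the (inverse) hazard rate correctly translated. Your extra care in verifying that both equilibria are IC, IR, money-burning, and allocate all items is a slightly more explicit version of what the paper asserts in passing, not a different argument.
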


\begin{proof}
    By \cref{cor:trivial-info-policy}, the symmetric equilibrium of the queueing game with a trivial information policy is assortatively efficient. Fix nontrivial information policy $p$ and an equilibrium of $\mathcal{Q}(p)$ which has a positive probability of releasing sudden bad news. By \cref{thm:FOSD-assortative-efficiency}, this equilibrium of $\mathcal{Q}(p)$ is assortatively inefficient. 
    
    By the revelation principle, any equilibrium of the queueing game corresponds to a Bayesian incentive-compatible and individually rational direct-revelation mechanism which yields the same allocation and payoffs for each agent as the equilibrium of the queueing game. In particular, the equilibrium of the queueing game with a trivial information policy corresponds to an incentive compatible and individually rational mechanism whose allocation is assortatively efficient. The equilibrium of $\mathcal{Q}(p)$ corresponds to an incentive compatible and individually rational mechanism whose allocation is not assortatively efficient. Both mechanisms always allocate all items. If the hazard rate of the value distribution is increasing, then by \cref{lem:hazard-rate-efficiency}, a mechanism with an assortatively efficient allocation minimizes total surplus among mechanisms that allocate all items. Hence this equilibrium of $\mathcal{Q}(p)$ yields weakly higher total surplus than the equilibrium of the queueing game with trivial information policy. The results for a decreasing hazard rate and strict monotonicity follow identically from Lemma \ref{lem:hazard-rate-efficiency}.
\end{proof}

\cref{thm:information-efficiency} provides a policy recommendation for an information designer who knows the value distribution of her customers. If the hazard rate is increasing, she can increase consumer surplus by providing information, and if the hazard rate is decreasing, she maximizes consumer surplus by providing no information.\footnote{Consider the case of a decreasing hazard rate. If the equilibrium of a non-trivial information policy never releases sudden bad news, then it either yields lower expected surplus (if the allocation is assortatively inefficient) or it yields the same expected surplus as the symmetric equilibrium of the trivial information policy (if the allocation is assortatively efficient).} Our results also suggest that an information designer may find it worthwhile to learn the distribution of values of her customers to guide her decisions. 

As discussed in subsection \ref{subsec:game-procedure-discussion}, \cref{thm:information-efficiency} depends on several modeling assumptions about the queueing environment, including the assumption that agents can costlessly join the queue. In \cref{Extensions}, we develop a model that uses the same framework, but incorporates entry costs. We derive different implications for how an information designer should optimize total surplus. In particular, maximizing total surplus may require releasing an \textit{intermediate} level of information: disclosing when the queue is full, but not revealing sudden bad news.

\section{Entry Costs}\label{Extensions}

In many real-world situations, agents face an upfront cost to join a queue. When a food bank distributes aid or a store releases an over-demanded product, agents physically travel to join the queue, which can be costly in money or time. In the absence of information, agents risk incurring the entry cost without securing the good, which may reduce consumer surplus. Furthermore, some low-value agents may choose not to queue at all. An information designer may share information about the queue's length to avoid wasted entry costs or unallocated items, but as we showed in \cref{Optimality}, information can have subtle effects on welfare.

In this section, we show that if agents face entry costs, and the value distribution has a strictly decreasing hazard rate, then the information designer maximizes surplus by revealing only some information. In particular, \cref{cor:entry-cost-total-surplus} states that one optimal information policy is to reveal when the queue is full, and nothing else. The resulting equilibrium has strictly higher welfare than revealing no information, and also has strictly higher welfare than any policy that reveals sudden bad news (such as full revelation, revealing the queue length at a fixed time, or announcing when the queue reaches a certain length).

Suppose that all agents face a homogeneous entry cost $c \in (0, \overline{v})$ to join the queue. If an agent attempts to join the queue, they first pay cost $c$, then the game proceeds as in the main model. We let $\mathcal{Q}_c(p)$ denote the entry-cost queueing game with entry cost $c$ and information policy $p$.\footnote
{For tractability, we assume that the cost is paid instantly. If the cost is paid in time, there may be a delay between when an agent receives information and when they arrive at the queue. If the cost is paid in money, there is no delay, although consequences for total surplus may differ: what we describe as total surplus is now consumer surplus.} 
Without entry costs, announcing when the queue is full does not affect payoffs or the equilibrium outcome: if the queue is full, agents obtain zero utility regardless of when or whether they attempt to join the queue. With entry costs, if the queue is full, agents lose utility if they attempt to join. As a result, agents change their behavior if the announcer discloses when the queue is full.

Modeling entry costs also changes the assortatively efficient allocation. In the entry cost model, an equilibrium is assortatively efficient if the $k$ agents with highest values above $c$, or all such agents if there are fewer than $k$, receive the good. When there are entry costs, releasing no information is no longer assortatively efficient: some agents who could have profitably entered the queue choose not to, due to uncertainty about whether they will obtain the item.

\begin{proposition}\label{prop:entry-cost-no-information}
    Suppose agents face an entry cost $c \in (0,\bar{v})$ and a trivial information policy. Then in equilibrium, the good is allocated to the (up to) $k$ agents with highest values above some $v^R > c$. In particular, the allocation is not assortatively efficient.
\end{proposition}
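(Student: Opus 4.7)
Proof plan. Since the trivial information policy $\tilde p$ emits no messages during play, the game $\mathcal{Q}_c(\tilde p)$ is strategically equivalent to one in which every agent simultaneously commits either to opt out (payoff $0$) or to pay $c$ and submit a sealed entry-time bid $b_i \in \mathbb{R}_+$. The plan is to characterize a symmetric equilibrium of this reduced sealed-bid game and then read off the threshold $v^R$ from the marginal entrant's indifference condition.

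First I would establish that the symmetric equilibrium has a threshold-with-bidding form: there is a cutoff $v^R$ and a strictly increasing $\beta:[v^R,\bar v]\to\mathbb{R}_+$ such that type $v$ opts out if $v<v^R$ and enters at time $\beta(v)$ otherwise. Single-crossing in $v$ forces the threshold structure, since the expected benefit of bidding earlier is monotone in $v$. Conditional on the event $\{v\geq v^R\}$, entrants effectively play the benchmark discriminatory-price auction against opponents whose values are drawn from $F$ restricted to $[v^R,\bar v]$ and whose number is random; \cref{prop:dpa_bid} together with \cref{RET:supply_demand_uncertain} pin down $\beta$ on $[v^R,\bar v]$ uniquely.

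Next, let $p(v^R)$ denote the probability that the marginal type wins conditional on entering, equal to the probability that fewer than $k$ of the other $n-1$ agents have values exceeding $v^R$. The marginal entrant's indifference between entering and opting out reads
\begin{equation*}
    p(v^R)\bigl(v^R-\beta(v^R)\bigr)=c.
\end{equation*}
Because $n>k$ and $F$ has full support, for any $v^R\in[0,\bar v)$ at least $k$ of the $n-1$ other agents draw values above $v^R$ with probability at least $(1-F(v^R))^{n-1}>0$, so $p(v^R)<1$. Combined with $\beta(v^R)\geq 0$, the indifference condition yields
\begin{equation*}
    v^R = \beta(v^R)+\frac{c}{p(v^R)}>c.
\end{equation*}
Assortative inefficiency then follows: by full support of $F$, with positive probability some agent draws $v\in(c,v^R)$ and therefore opts out. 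On the positive-probability event that fewer than $k$ of the other agents have $v\geq v^R$, this agent should have received the good under the entry-cost notion of assortative efficiency, but does not.

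The main obstacle is establishing existence of a cutoff $v^R\in(c,\bar v)$ solving the indifference condition, since $\beta$ and $p$ both depend on the candidate $v^R$. I would handle this by an intermediate value argument on the marginal-entry payoff $\Pi(v^R):=p(v^R)(v^R-\beta(v^R))-c$: at $v^R=c$, the winning surplus satisfies $v^R-\beta(v^R)\leq c$, and $p(v^R)<1$, so $\Pi(c)\leq c(p(c)-1)<0$; as $v^R\uparrow\bar v$, the probability of any competing entrant vanishes, so $p(v^R)\to 1$ and $\beta(v^R)\to 0$, yielding $\Pi(v^R)\to\bar v-c>0$. Continuity of $\Pi$ in $v^R$ then delivers the desired cutoff.
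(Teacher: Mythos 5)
Your proposal is correct and follows essentially the same route as the paper: reduce the game to a sealed-bid discriminatory auction with an entry decision, obtain the cutoff structure from monotone comparative statics, and combine the marginal type's indifference with $p(v^R)<1$ to force $v^R>c$ (the paper states this more tersely as a contradiction---a cutoff $v^R\leq c$ would give the marginal entrant strictly negative utility---and omits the existence argument you supply via the intermediate value theorem). One small imprecision: on the event that fewer than $k$ of the other agents have values above $v^R$, an agent with $v\in(c,v^R)$ need not personally be among the top $k$ values above $c$, but the allocation still fails assortative efficiency on that event because strictly fewer than $\min\bigl(k,\#\{j:v_j>c\}\bigr)$ items are allocated, so your conclusion stands.
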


To achieve assortative efficiency, we consider an information policy that discloses when the queue is full (and nothing else). By announcing when the queue is full, agents with values close to $c$ still have an incentive to join before they receive a message. 
\begin{proposition}\label{prop:entry-cost-ae}
    Suppose information policy $p$ discloses when the queue becomes full and sends no other messages. Then the unique symmetric equilibrium of $\mathcal{Q}_c(p)$ is assortatively efficient.
\end{proposition}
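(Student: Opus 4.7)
The plan is to reduce the entry-cost queueing game under $p$ to a sealed-bid discriminatory auction with an outside option, and then pin down the reservation value by two deviation arguments. First I would observe that $p$ releases no information before the queue becomes full; and once the queue-full announcement arrives, any agent not yet in the queue strictly prefers to abstain (entering would cost $c$ with no chance of the item). Hence each strategy is equivalent to either ``abstain'' or ``commit to an entry time $t$,'' with the agent paying $c$ only if the clock reaches $t$ before the announcement. A pure symmetric equilibrium is therefore parameterized by a cutoff $v^*$ and a function $\beta:(v^*,\bar v]\to\mathbb{R}_+$, and a standard no-pooling argument (a higher type bidding $t+\varepsilon$ jumps ahead at negligible cost in wait time) forces $\beta$ to be strictly increasing on its support.

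Given strict monotonicity, a type-$v$ agent who imitates $\tilde v$ is served exactly when fewer than $k$ of the other $n-1$ agents have values above $\tilde v$, so her expected payoff is $F^{(n-1)}_k(\tilde v)\,(v - c - \beta(\tilde v))$. The envelope theorem, together with the indifference condition $U(v^*) = 0$, uniquely pins down $\beta$ as a function of $v^*$:
\[
\beta(v) \;=\; v - c - \frac{1}{F^{(n-1)}_k(v)}\int_{v^*}^v F^{(n-1)}_k(s)\,ds.
\]
It remains to show $v^* = c$. If $v^* > c$, then $\beta(v^*) = v^* - c > 0$, so any type $v \in (c, v^*)$ could deviate by bidding some small $t \in (0, v^* - c)$, placing herself strictly after every entering type. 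She is then served exactly on $\{Y_k^{(n-1)} \le v^*\}$, an event of positive probability, and her deviation payoff tends to $F^{(n-1)}_k(v^*)(v - c) > 0$ as $t \to 0$, contradicting her equilibrium payoff of zero. Conversely, if $v^* < c$, any type $v \in (v^*, c)$ who enters at any $t \ge 0$ obtains conditional payoff $v - c - t < 0$, which is strictly dominated by abstaining. Hence $v^* = c$.

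With $v^* = c$, every type above $c$ enters at $\beta(v)$, higher types enter strictly earlier, and the $k$ highest values above $c$ (or all such values, if fewer than $k$) fill the queue---precisely the assortatively efficient allocation. Uniqueness of $\beta$ is immediate from the envelope formula, and existence reduces to a routine verification that the proposed $\beta$ satisfies the global incentive constraint. The main obstacle I anticipate is the first deviation argument ruling out $v^* > c$: it hinges on $\beta(v^*) > 0$, so that a deviating type can place herself strictly after every entering type while still being served with positive probability. That positivity must itself be derived from the indifference identity $\beta(v^*) = v^* - c$ evaluated at $v^* > c$, so the logic closes only after the envelope characterization of $\beta$ is in place.
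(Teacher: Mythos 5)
Your proposal is correct and follows essentially the same route as the paper: both reduce the game to a sealed-bid discriminatory price auction with reserve price $c$ (the queue-full announcement making the entry cost contingent on winning), and your envelope-derived total payment $c+\beta(v) = v - \frac{1}{F^{(n-1)}_k(v)}\int_c^v F^{(n-1)}_k(s)\,ds$ coincides with the paper's bid function $\E[\max(c, Y^{(n-1)}_k)\mid \max(c,Y^{(n-1)}_k)<v]$. The only omission is that the paper's uniqueness argument also rules out mixed symmetric strategies (via a short Topkis argument), whereas you restrict attention to pure-strategy symmetric equilibria from the outset.
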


If the announcer reveals ``too much'' information, in the sense formalized in \cref{def:sudden-bad-news}, we again lose assortative efficiency due to a mass of agents potentially rushing the queue. Thus to maintain assortative efficiency, the announcer must release some information (such as disclosing when the queue is full) without disclosing sudden bad news. As described in \cref{sec:results}, a wide variety of natural information policies can therefore break assortative efficiency.

\begin{proposition}\label{prop:entry-cost-sudden-bad-news}
    Consider the entry-cost queue game with an information policy $p$ that always reveals when the queue is full, and may also send other messages. If in equilibrium $\mathcal{Q}_c(p)$ has a positive probability of revealing sudden bad news, then the equilibrium is assortatively inefficient.
\end{proposition}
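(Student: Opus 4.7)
The plan is to mirror the contradiction argument used in the proof of \cref{thm:FOSD-assortative-efficiency}, with small adjustments to accommodate entry costs. The key observation is that since $p$ always reveals when the queue is full, an agent who has not yet entered the queue can condition on the queue not being full; in particular, agents never pay the entry cost when they cannot possibly win. This means payoff equivalence still pins down expected in-line waiting time (the time-based payment) conditional on winning, even after subtracting the uniform entry cost $c$ from all payoffs. I would first verify that analogs of \cref{RET:supply_demand_uncertain}, \cref{lem:t_AE-continuous}, \cref{lem:t_AE-FOSD-dominance}, and \cref{lem:transfer-determines-entry-time} hold verbatim in $\mathcal{Q}_c(p)$, with $t_{AE}$ now denoting the expected time payment (not including $c$) conditional on winning in any assortatively efficient mechanism that allocates to the top-$k$ types above $c$.

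Next, I would assume for contradiction that the equilibrium is assortatively efficient and that sudden bad news is revealed at some time $t$ with positive probability. Fix the realized history leading up to the message. Let $\hat{v}$ be the highest type that could still be in the game immediately before the message at time $t$, so by the definition of sudden bad news $\hat{v} < \bar{v}$. Because $p$ also reveals when the queue is full, the only remaining uncertainty about items is captured by $\mu_{t^+}$, and by payoff equivalence we have $t_Q(v \mid \mu, \hat{v}) = t_{AE}(v \mid \mu, \hat{v})$ for every type $v$ still in the game and for any belief $\mu$ that arises on path.

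I would then argue, exactly as in \cref{thm:FOSD-assortative-efficiency}, that $t_{AE}(\hat{v} \mid \mu_{t^+}, \hat{v}) = t$. If this quantity were strictly less than $t$, by continuity (the entry-cost analog of \cref{lem:t_AE-continuous}) some slightly higher type $v' > \hat{v}$ would also satisfy $t_{AE}(v' \mid \mu_{t^+}, v') < t$, and by the FOSD/belief-monotonicity analog of \cref{lem:t_AE-FOSD-dominance} the same bound would hold under every earlier belief, so by the transfer-entry-time analog of \cref{lem:transfer-determines-entry-time}, type $v'$ could not have entered the queue before $t$, contradicting the maximality of $\hat{v}$. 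The equality cannot exceed $t$ since the clock has reached $t$. Applying the FOSD lemma to the strict FOSD drop from $\mu_{t^+}$ to $\mu_{t^-}$ gives $t_{AE}(\hat{v} \mid \mu_{t^-}, \hat{v}) > t$, and by continuity there exists a type $v < \hat{v}$ still in the game with $t_{AE}(v \mid \mu_{t^-}, \hat{v}) > t$. Such a type wins with positive probability yet would owe more than $t$ in expected waiting time, which is impossible since only clock values in $[0,t]$ remain; this contradicts assortative efficiency.

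The main obstacle will be carefully reestablishing the four preliminary lemmas in the entry-cost environment. Payoff equivalence is essentially immediate once we note that the entry cost is a uniform lump-sum transfer conditional on entering and that agents only enter when they attribute positive probability to winning; the continuity and FOSD-monotonicity of $t_{AE}$ carry over because the assortatively efficient benchmark simply truncates the value distribution at $c$, which does not disturb the monotone comparative statics in beliefs used in \cref{lem:t_AE-FOSD-dominance}. The transfer-to-entry-time mapping requires a slightly more careful statement since types below $c$ never enter, but the logic is unchanged for the relevant types $v > c$ on which the contradiction is derived. Once these building blocks are in place, the endgame argument reproduces that of \cref{thm:FOSD-assortative-efficiency} essentially verbatim.
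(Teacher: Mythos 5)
Your proposal is correct in substance, but it takes a different route from the paper. You port the machinery of \cref{thm:FOSD-assortative-efficiency} directly into the entry-cost environment: you re-establish the four preliminary lemmas with $t_{AE}$ reinterpreted as the time-only payment conditional on winning (exploiting that full-queue announcements plus assortative efficiency make ``enters'' and ``wins'' coincide, so the entry cost is a constant lump sum conditional on winning), and then rerun the endgame contradiction. The paper instead argues by reduction: it maps the conjectured assortatively efficient equilibrium of $(\mathcal{Q}_c(p),F)$ to a strategy profile of the no-entry-cost game $(\mathcal{Q}(p),G)$ with $G(v)=F(v+c)$, verifies that this mapping preserves equilibrium, assortative efficiency, and the sudden-bad-news event (which requires some care, e.g.\ restricting attention to deviations that do not join after the full-queue announcement, since such deviations are costless in one game and costly in the other), and then invokes \cref{lem:zero-atom-assortative-inefficiency}, a version of \cref{thm:FOSD-assortative-efficiency} allowing an atom at zero created by truncating the shifted negative values. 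The trade-off is clear: the paper's reduction localizes all new work to the strategy-mapping bookkeeping plus one check that the theorem tolerates an atom at zero, whereas your direct approach avoids that bookkeeping entirely but obliges you to re-verify payoff equivalence (including the boundary condition $u(c)=0$ at the participation threshold), the continuity and strict FOSD-monotonicity of the reserve-price benchmark $\E[\max(c,Y_\kappa)]$, and the transfer-to-entry-time lemma for a population in which a positive mass of types never participates --- exactly the atom-at-zero issue the paper isolates in \cref{lem:zero-atom-assortative-inefficiency}. You flag each of these points, so I see no genuine gap, only details to execute; note in particular that strictness in the FOSD step requires $\hat{v}>c$, which your argument delivers because $t_{AE}(\hat{v}\mid\mu_{t^+},\hat{v})=t>0$ forces the top remaining type above the participation threshold.
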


If the announcer always discloses when the queue is full, we can reinterpret the game as a standard queueing game in which there are no entry costs and agents' values are reduced by $c$. Sudden bad news is still defined relative to beliefs over the \emph{positive} number of items remaining. \cref{thm:FOSD-assortative-efficiency} then implies that there is no assortatively efficient equilibrium. To connect assortative efficiency and welfare in the entry cost setting, we establish a lemma similar to the results in \cite{haro08}.

\begin{lemma}\label{lem:hazard-rate-efficiency-all-mechs}
    Suppose agents' values are distributed with support $[\underline{v}, \overline{v}]$ for $\underline{v} \leq 0$ and $\overline{v} \geq 0$. Consider the class of incentive compatible and individually rational money-burning mechanisms. If the hazard rate of the value distribution is weakly (strictly) decreasing, then an assortatively efficient mechanism yields a weakly (strictly) higher total surplus than any mechanism in this class that is not assortatively efficient.
\end{lemma}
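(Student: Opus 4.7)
The plan is to adapt the Myerson-style virtual-value representation of \cite{haro08} to this setting, which allows negative values. I would express total surplus of any IC, IR money-burning mechanism as a weighted integral of its allocation against the inverse hazard rate $H(v) := (1-F(v))/f(v)$, and then maximize pointwise over feasible allocations.

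First, I would record two consequences of combining IR with the money-burning constraint $t_i \geq 0$. If $v_i < 0$, any positive allocation probability to type $v_i$ would force $u_i(v_i) = v_i\, x_i(v_i) - t_i(v_i) < 0$, violating IR; hence the ex post allocation $\tilde{x}_i(v_i, v_{-i}) = 0$ almost surely for $v_i < 0$, and $u_i(v_i) = 0$ there. At $v_i = 0$, IR combined with $t_i(0) \geq 0$ yields $u_i(0) = 0$, and since $u_i$ is non-decreasing by IC, also $u_i(\underline{v}) = 0$. Then by the envelope theorem and integration by parts,
\[
\E[u_i(v_i)] \;=\; u_i(\underline{v}) + \int_{\underline{v}}^{\overline{v}} x_i(s)\bigl(1 - F(s)\bigr)\, ds \;=\; \E\!\left[H(v_i)\, x_i(v_i)\right].
\]
Since burned money is pure social loss, total surplus equals the sum of expected utilities, yielding $\text{Total Surplus} = \E_v\!\left[\sum_i H(v_i)\,\tilde{x}_i(v)\right]$.

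The final step is a pointwise optimization. A weakly decreasing hazard rate makes $H$ weakly increasing on $[0, \overline{v}]$. For each realization $v$, I would maximize $\sum_i H(v_i)\,\tilde{x}_i(v)$ over $\tilde{x}_i(v) \in [0,1]$ subject to $\sum_i \tilde{x}_i(v) \leq k$ and $\tilde{x}_i(v) = 0$ whenever $v_i \leq 0$. Since $H$ is non-negative and weakly increasing, the maximum is attained by allocating to the (up to) $k$ agents with the largest strictly positive values---precisely the AE allocation. This rule is monotone in each $v_i$, hence implementable as an IC, IR money-burning mechanism (for instance, via \cref{prop:entry-cost-ae}). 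Taking expectations yields the weak inequality. If the hazard rate is strictly decreasing and another IC, IR mechanism differs from AE on a positive-probability event, its pointwise objective is strictly smaller on that event, giving the strict inequality in expectation.

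The hard part will be the boundary analysis: carefully deriving $u_i(\underline{v}) = 0$ and $\tilde{x}_i(v_i, v_{-i}) = 0$ for $v_i < 0$ from IR and money-burning together, so that the virtual-surplus identity extends across the entire support $[\underline{v}, \overline{v}]$ rather than just $[0, \overline{v}]$. Once those boundary facts are in hand and the pointwise maximizer (AE) is verified to be monotone, the rest is a routine virtual-surplus computation in the style of \cite{haro08}.
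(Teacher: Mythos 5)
Your proposal is correct and follows essentially the same route as the paper's proof: both derive that IR plus nonnegative burned payments force zero utility and zero allocation for nonpositive types, apply the envelope theorem and integration by parts to write total surplus as the expectation of the inverse hazard rate weighted by the allocation, and then maximize pointwise, with the strict case following from a positive-probability difference in allocations. Your additional care about the boundary at $\underline{v}$ and the implementability of the pointwise maximizer is a fine elaboration of steps the paper treats as standard.
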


The proof is similar to that of Theorem 2.9 of \cite{haro08}. The only differences are that (i) we allow for agents with negative values and (ii) we maximize surplus over all possible mechanisms, not only over mechanisms that always allocate the good.\footnote{Indeed, when $\underline{v} < 0$, individually rational mechanisms do not always allocate the good. Negative values model agents for whom the entry cost exceeds their value.}

Combining \cref{prop:entry-cost-no-information}, \cref{prop:entry-cost-ae}, and \cref{prop:entry-cost-sudden-bad-news} with \cref{lem:hazard-rate-efficiency-all-mechs} yields a condition under which the information designer prefers to release an intermediate amount of information.

\begin{corollary}\label{cor:entry-cost-total-surplus}
    Suppose the hazard rate of the value distribution is weakly decreasing. Let $p$ be an information policy that discloses when the queue is full and makes no other announcements. The symmetric equilibrium of $Q_c(p)$ yields weakly higher total surplus than any equilibrium of $Q_c(p')$ for any information policy $p'$. If the hazard rate is strictly decreasing and $p'$ is the trivial information policy, or the equilibrium of $Q_c(p')$ releases sudden bad news with positive probability, this inequality is strict.
\end{corollary}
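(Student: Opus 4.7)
The plan is to combine the revelation principle with \cref{lem:hazard-rate-efficiency-all-mechs} to obtain the weak surplus bound, and then handle the two strictness conditions by showing that the equilibrium of $\mathcal{Q}_c(p')$ fails assortative efficiency and invoking the strict form of the lemma. Concretely, every equilibrium of $\mathcal{Q}_c(p')$ corresponds via the revelation principle to a Bayesian incentive-compatible and individually rational money-burning direct mechanism delivering the same allocation and expected payoffs to each type. After shifting values by $-c$ (so net values lie in $[-c,\bar v-c]$, with the same hazard rate monotonicity as $F$), \cref{lem:hazard-rate-efficiency-all-mechs} implies that under a weakly decreasing hazard rate, any assortatively efficient mechanism in this class weakly dominates all others in total surplus. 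By \cref{prop:entry-cost-ae}, the symmetric equilibrium of $\mathcal{Q}_c(p)$ is itself assortatively efficient, so the associated mechanism attains the class maximum; this delivers the weak inequality for every $p'$.

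For strictness I would split into the two listed cases. If $p'$ is trivial and the hazard rate is strictly decreasing, \cref{prop:entry-cost-no-information} yields a cutoff $v^R>c$ such that types in $(c,v^R)$ do not enter even though an assortatively efficient mechanism would allocate to them, so the induced mechanism fails assortative efficiency; the strict case of \cref{lem:hazard-rate-efficiency-all-mechs} then gives a strict welfare gap. If instead the equilibrium of $\mathcal{Q}_c(p')$ releases sudden bad news with positive probability, I would appeal to \cref{prop:entry-cost-sudden-bad-news} to conclude the induced mechanism is not assortatively efficient, and again invoke the strict case of \cref{lem:hazard-rate-efficiency-all-mechs}.

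The main obstacle is the sudden-bad-news branch when $p'$ does not already disclose queue-full events, since \cref{prop:entry-cost-sudden-bad-news} is stated for policies that always reveal when the queue is full. I would resolve this by augmenting $p'$ to $\tilde p'$ which additionally discloses queue-full events, and arguing that the surplus of $\mathcal{Q}_c(\tilde p')$ weakly exceeds that of $\mathcal{Q}_c(p')$ because the extra information purely averts wasted entry costs (each agent who would otherwise pay $c$ only to find the queue closed now abstains, while the menu of achievable allocations is unchanged). Alternatively, one can lift the Payoff Equivalence argument underlying \cref{thm:FOSD-assortative-efficiency} directly to the entry-cost environment: sudden bad news at time $t$ still forces the highest remaining type's expected conditional payment to exceed $t$, contradicting assortative efficiency. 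Verifying the augmentation rigorously is the delicate step, because entry costs couple beliefs about remaining supply to the decision whether to pay the cost at all, so reshuffling information can in principle perturb off-path incentives; the lift of the direct Payoff Equivalence argument is the cleaner fallback.
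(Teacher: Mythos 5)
Your proposal is correct and follows exactly the paper's (unwritten) proof, which is precisely the combination of \cref{prop:entry-cost-no-information}, \cref{prop:entry-cost-ae}, and \cref{prop:entry-cost-sudden-bad-news} with \cref{lem:hazard-rate-efficiency-all-mechs} after shifting values by $-c$ (which preserves hazard-rate monotonicity); the weak bound and both strictness branches are handled the same way. The gap you flag is genuine and is present in the paper itself—\cref{prop:entry-cost-sudden-bad-news} is proved only for policies that disclose queue-full events, while the corollary's second strictness clause quantifies over arbitrary $p'$—and of your two patches the direct lift of the Payoff Equivalence argument is the sound one (the augmentation route is shaky, since replacing $p'$ by $\tilde p'$ changes the equilibrium and need not preserve either the sudden-bad-news event or a surplus ranking), so you have in fact been more careful here than the paper.
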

Returning to our leading example of the food pantry, suppose that the hazard rate of the value distribution is indeed strictly decreasing, and agents face an entry cost to join the queue. Suppose that the food pantry wishes to maximize total surplus by altering the information it releases about the queue length. The food pantry can do so by announcing only when the queue is full; furthermore, this policy should be \emph{strictly} preferred to a wide range of other natural policies. It is strictly suboptimal for the food pantry to reveal no information. Yet it is also strictly suboptimal to fully reveal the length of the queue, to disclose the queue length at a fixed time, or to announce when the queue reaches a fixed length (other than $k$). These results show that information has a nuanced effect on total surplus, and may require organizations to carefully consider what information they release. Our results can guide organizations in making their decision.

\section{Conclusion}\label{Conclusion}

We study how access to information affects welfare when queues form in advance of the distribution of a good. Under a condition we call ``sudden bad news,'' releasing information leads to an assortatively inefficient allocation. We partially characterize when providing real-time information increases or decreases total surplus, and we show that the effect on total surplus depends on the hazard rate of the value distribution. When agents face an entry cost to join the queue, an information designer may maximize surplus by revealing when the queue is full and not revealing further information. 

One direction for future work is to determine the surplus-maximizing information policy for a given distribution of values. If the hazard rate is increasing, \cref{thm:information-efficiency} shows that releasing sudden bad news increases surplus relative to releasing no information, but even in simple examples it is difficult to determine the optimal policy. 

There are many ways to extend our model. When there is uncertainty about the initial supply of items, agents have multi-dimensional beliefs and our ``sudden bad news'' criterion no longer necessarily induces a rush. There are also settings with multiple queues: for example, homeless shelters in the same city may each have a queue, or multiple stores may distribute the same scarce good. One could also model a setting in which those earlier in line receive a higher-quality good or face a shorter processing time. Finally, future work could expand the set of permissible information policies and actions by modeling costly information acquisition, private messages to a subset of agents, or agents who can leave and re-enter the queue.

\newpage

\begin{appendix}

\section{Definitions}\label{sec:appendix-definitions}

The \emph{message history} $h^M = \big\{(m_i, t_i)\big\}_{i=1}^{j}$ captures the log of past messages, where $(m_i, t_i)$ signifies that the $i^{\text{th}}$ message $m_i$ was announced at time $t_i$. We denote the space of all possible message histories by $\mathcal{H}^M$. The message history is visible to both the announcer and the players. 

The \emph{queue history} $h^Q$ captures the history of past arrivals to the queue, and is visible only to the announcer. We denote a generic queue history as $h^Q = \{ t_1, ..., t_m \} \in \mathcal{H}^Q$, where $t_i$ denotes the time that the $i^{\text{th}}$ agent in the queue joined the queue.

Whereas the message history is public knowledge, the queue history is not. We therefore combine $h^M$ and $h^Q$ into one history $h \in\mathcal{H}$ only visible to the announcer that records the order of messages and queue entrances. 

\begin{definition}[Information Policy]    
    An information policy specifies the time and content of messages as a function of the history and current time. Let $\mathcal{M}$ denote an arbitrary message space. An \emph{information policy} is a function 
    \begin{align*}
        p: \mathbb{R}_+ \times \mathcal{H} \to \Delta(\mathcal{M} \times \mathbb{R}_+),
    \end{align*} 
    where $p(t, h)$ is the distribution of possible messages sent when the current time is $t$ and the history is $h$.\footnote{We choose to write the information policy as mapping onto a release time to be consistent with our sequential game procedure.} The announcer cannot send messages earlier than the current time: any tuple $(m', t')$ in the support of $p(t, h)$ has $t' \leq t$.\footnote{The message is only announced if it conforms with the time restrictions given in the game procedure: namely, $t' < t$ in the main stage of the game, or $t' \leq t$ in the nested game.} We also restrict information policies to send a finite number of messages over the course of the game. Most explicit information policies we consider are deterministic, in which case we write $p$ as a function mapping to the space $\mathcal{M} \times \mathbb{R}_+$.
\end{definition}

\begin{definition}[Strategy]
    A strategy $s_i$ for player $i$ is a map from the current message history, time, and player value to their entry time. Formally, 
    \begin{align*}
        s_i : \mathcal{H}^M\times \mathbb{R}_+ \times \mathbb{R}_+ &\rightarrow \mathbb{R}_+\\
        (h^M, t, v) &\mapsto b
    \end{align*}
    Each strategy implicitly assumes a fixed information policy $p$, so that $s_i(h^M, t, v | p) = b$.
\end{definition}

\begin{definition}[Equilibrium]
    We consider Perfect Bayesian Equilibrium (PBE). If strategy profile $S^*$ forms a PBE of $\mathcal{Q}(p)$, then
    \begin{enumerate}[label = (\roman*)]
        \item For each player $i$, strategy $s_i^*$ is a best response to $s_{-i}^*$, the common belief $\mu$, and current message history $h^M$.
        \item The on-path common belief $\mu\in\Delta(\{1, \ldots, n\})$ over the positive number of items remaining is correctly derived via Bayes' Rule.\footnote{See the main text for a discussion of why beliefs are only over the positive number of items remaining.} In particular, let $X_t$ be a random variable denoting the number of items remaining at time $t$. Let $\mu_t(j)$ denote the probability that $j>0$ items remain, conditional on $X_t > 0$. Then $\mu_t(j)$ is given by
        
    \end{enumerate}
        \begin{align}\label{eqn:history-bayes-update}
            \mu_t(j) := Pr(X_t = j \big| h^M, t,X_t>0) = \frac{Pr(h^M, t, X_t = j)}{Pr(h^M,t, X_t>0)},
        \end{align}
    
    When $Pr(h^M,t) = 0$, we replace the probability mass expressions with their respective probability densities.
\end{definition}

\begin{definition}[Limiting Beliefs ($\mu_{t^+}$)]
    Fix a time $t$. Consider the last Bidding Stage that occurred strictly before time $t$ in which bidders had the opportunity to update their bids. Let $\nu \in \Delta(\{1,\dots,k\})$ denote an agent's belief about the number of items currently remaining at this stage. Let $\psi \in \Delta(\{1, \dots,n\})$ denote her belief about the number of bidders who will submit bids strictly greater than $t$ in this Bidding Stage. (Note that $\psi$ is a property of the equilibrium bidding strategies.)

    Then define $\mu_{t+} \in \Delta(\{1, \dots, k\})$ by:

    $$
    \mu_{t^+}(j) := \frac{\sum_{j' = j}^{k} \nu(j') \psi(j'-j)}{\sum_{h=1}^{k}\sum_{h' = h}^{k} \nu(h') \psi(h'-h)}.
    $$

    That is, the belief that there are $j$ items remaining at $t^+$ sums over the probability that there were $j$ items remaining during the previous bidding stage and no one else joined the queue before time $t$, plus the probability that there were $j+1$ items remaining during the previous bidding stage and exactly one additional agent joined the queue before time $t$, and so forth.

    Informally, $\mu_{t^+}$ is an agent's belief in this bidding stage about the number of items currently remaining, adjusted downward by their belief about the number of agents who will submit bids strictly greater than $t$, conditional on no further messages being announced before time $t$. The denominator normalizes the belief to condition on a positive number of items remaining.
\end{definition}

\begin{definition}[Limiting Beliefs ($\mu_{t^-}$)]
    If the announcer sent at least one message at time $t$, then $\mu_{t^-} \in \Delta(\{1, 2, \dots,k\})$ is simply bidders' beliefs about the number of items remaining (conditional on a positive number of items remaining) after receiving the first message at time $t$, but before any agent can join the queue at time $t$.

    If the announcer did not send a message at time $t$, then $\mu_{t^-}$ is bidders' beliefs about the number of items remaining (that is, $k$ minus the number of agents who arrived strictly before time $t$) given that no message arrived at time $t$.

    When necessary, we continue in this fashion to index further rounds of the nested game occurring at the same time-step. We use $t^{--}$ to index beliefs after the announcer has had two opportunities to send messages at time $t$. That is, $t^{--}$ indexes beliefs immediately after the second instance that the announcer has the opportunity to send a message in the Resolution Stage at time $t$, if this step is reached. We then use $t^{---}$ for the third instance that the announcer has the opportunity to send a message at time $t$, and so forth.
\end{definition}

\section{Unbounded Support}\label{app:unbounded-support}

Our results also extend to distributions $F$ with support on $[0, \infty)$. Let $Y\sim F$. Then our results hold exactly as written so long as $\E[Y] < \infty$. If $\E[Y] = \infty$, then the ``strict'' welfare results in \cref{thm:information-efficiency} and \cref{cor:entry-cost-total-surplus} do not hold, because the expected total surplus from these equilibria are infinite. Note that extending our results to unbounded $F$ is a meaningful extension: the only distributions with monotonically decreasing hazard rates have unbounded support.

All results in \cref{sec:results} are valid as written. The only change to the game procedure is that the clock begins at time $T = \infty$, allowing agents to submit arbitrarily high bids in the first round, but the game procedure still has a finite number of rounds. No further regularity conditions are needed, because all expectations used in the proofs of \cref{sec:results} are conditional expectations of bounded random variables, and are therefore finite. In addition, the applications of the Envelope Theorem are still valid with unbounded support.

In \cref{Optimality} and \cref{Extensions}, the 
finite expectation ensures that the strict inequalities hold. For general distributions with unbounded support, if $\E[Y] = \infty$, then the expected total surplus is infinite as well. For example, consider the Pareto distribution with $\alpha \leq 1$. \cref{lem:hazard-rate-efficiency} shows that randomly allocating all items weakly minimizes surplus. However, that policy would yield expected surplus $ \frac{k}{n} \times \E[Y] = \infty$. Therefore an assortatively efficient mechanism cannot yield strictly higher surplus. 

For any distribution with finite expectation, all mechanisms have finite expected surplus. Any mechanism has a smaller expected surplus than giving \emph{every} agent a free copy of the good, which would yield expected total surplus of $n \E[Y] < \infty$. Thus so long as $\E[Y]$ is finite, our strict welfare comparisons are meaningful. Our regularity condition therefore states that all results hold so long as no potential mechanism yields infinite surplus.

Some examples of distributions with strictly decreasing hazard rate and finite expectation include the Pareto distribution with scale parameter $\alpha > 1$ and the Weibull distribution with scale parameter $\lambda > 0$ and shape parameter $\beta \in (0, 1)$.

\section{Examples of Information Policies}

\begin{definition}[Trivial]
    Information policy $p$ is trivial if 
    \begin{align*}
        \text{supp} \left\{ p(t, h)\right\} \subset \Delta(\mathcal{M}\times\{0\}).
    \end{align*}
\end{definition}
This means that messages can only arrive when $t = 0$. In this case, each agent chooses a single bid $b$ to maximize her expected profit, with no chance to revise her bid in the future.

One function to obtain the information policy $p$ in \cref{pfixedtime} is given by
    \begin{align*}
        p(t, h) = \begin{cases}
            (|h^Q|, \tau) & \; \text{if} \; t > \tau \\
            (0, 0) & \; \text{otherwise.}
        \end{cases}
    \end{align*}
    To see why, suppose the clock has not yet reached $\tau$ and there are $|h^Q|$ agents in the queue. If no agent joins the queue before $\tau$, the information policy will announce that the queue has length $|h^Q|$ at time $\tau$. If an agent does join before $\tau$, the policy updates the message accordingly. Also note that while $h^Q$ is not an explicit input of $p$, it can be recovered from $h$. 

One function to obtain the information policy $p$ in \cref{peachentry} is to announce whenever someone enters the queue. Then
    \begin{align*}
        p(t, h) = 
        \begin{cases}
            (|h^Q|, t) & \text{ if } h^M_{-1} \neq |h^Q| \\
            (0, 0) & \text{ otherwise,} 
        \end{cases}
    \end{align*}
    where $h^M_{-1}$ denotes the most recent message sent by the announcer. 
    
    Each time an agent enters the queue, a new message is drawn from $p$, and it will trigger a message to announce the current length of the queue. Afterward, $p$ reverts to subsequent message time of $t=0$, which corresponds to sending no message. This default state remains until another agent joins the queue. The policy ensures all agents know the current length of the queue at all times. 

\begin{example}[Fixed Time and State]\label{ptimestate}
    Say $p$ announces at time $\tau$ if and only if the queue contains exactly $k$ agents. Then
    \begin{align*}
        p(t, h) = 
        \begin{cases}
            (|h^Q|, t) & \text{ if } |h^Q| = k \text{ and } t \geq \tau\\
            (0, 0) & \text{ otherwise.} 
        \end{cases}
    \end{align*}
\end{example}
The intuition here is that the absence of an message can also be informative, and may change how agents behave after the time at which a message could have come. 

\begin{proof}[Proof of \cref{RET:supply_demand_uncertain}]
    Consider a mechanism with uncertainty about the number of potential agents and number of items to be allocated. Let $\mathcal{N} = \{\underline{n}, \ldots,  \bar{n}\}$ denote the set of potential agents and let $\mathcal{K} = \{\underline{k}, \ldots,  \bar{k}\}$ represent the set of potential number of goods available.
    We allow for arbitrary correlation, so (with slight abuse of notation) the probability that there are $n$ agents and $k$ items is given by $(n, k) \sim H \in \Delta(\mathcal{N} \times \mathcal{K})$ with probability mass function $h(n,k)$.\footnote{In the middle of a queueing game, the number of items remaining uniquely determines the number of competitors remaining. We use this lemma on ``paused'' moments in the middle of the game, and thus need to allow for correlation.}
    Recall that agents have values drawn independently and identically from atomless $F$. 
    All participating agents share common beliefs about the number of competitors and items for sale.

    By assumption, both mechanisms have the same probability $\tilde{G}(v; n, k)$ of allocating an item to an agent of type $v$ when there are $n$ agents and $k$ items. The overall probability that an agent reporting type $z$ wins an item is therefore
\begin{align*}
    G(z)
    &:= \sum_{n = \underline{n}}^{\bar{n}} \sum_{k = \underline{k}}^{\bar{k}} h(n,k) \tilde{G}(z; n, k).
\end{align*}
The rest of the proof is a standard application of the Payoff Equivalence theorem. Note that type $v=0$ obtains zero utility and makes zero payment in both mechanisms. The expected payment of type $v$ is then given by
\begin{align*}
    p(v_i) &= p(0) + \int_0^{v_i} v G'(v) dv
\end{align*}

which depends only on the allocation rule $G$.
\end{proof}

\begin{proof}[Proof of \cref{prop:dpa_bid}]
    As in \cref{RET:supply_demand_uncertain}, let $\mathcal{N} = \{\underline{n}, \ldots,  \bar{n}\}$ denote the set of potential agents, let $\mathcal{K} = \{\underline{k}, \ldots,  \bar{k}\}$ represent the set of potential number of goods available. Abusing notation slightly, suppose $(n, k) \sim H \in \Delta(\mathcal{N} \times \mathcal{K})$ with probability mass function $h(n,k)$.
    Agents have values drawn independently and identically from atomless $F$, and let $Y^{(n)}_k \sim F^{(n)}_k$ represent the random variable that is the $k^{\text{th}}$ highest of $n$ draws from $F$.\footnote{If $k > n$, we take $Y^{(n)}_k \equiv 0$.} 
    All participating agents share common beliefs about the number of competitors and items for sale. 
    
     First consider a unit-demand uniform-price auction. For any $n$ and $k$, if an agent faces $n$ other bidders and there are $k$ items, it is an equilibrium to bid truthfully.\footnote{See e.g. \cite{kris09}, chapter 13.} Consequently, even when there is uncertainty over agents and items, it is still optimal to bid truthfully.
     An agent with value $v$ wins if $v > Y^{(n-1)}_k$, which occurs with probability $F^{(n-1)}_k(v)$. The overall probability she wins is therefore
\begin{align*}
    G(v)
    &:= \sum_{n = \underline{n}}^{\bar{n}} \sum_{k = \underline{k}}^{\bar{k}} h(n,k) F^{(n-1)}_k(v),
\end{align*}
    and the expected payment is 
    \begin{align}\label{paymentUPA}
        p^{UPA}(v) &= \sum_{n = \underline{n} }^{\bar{n}} \sum_{k = \underline{k}}^{\bar{k}} h(n,k)  F^{(n-1)}_k(v) \E[Y^{(n-1)}_k | Y^{(n-1)}_k < v].
    \end{align}
    Next consider a discriminatory price auction. Let $\beta$ be a symmetric, increasing equilibrium. For an agent with value $v$, the expected payment is
    \begin{align}\label{paymentDPA}
        p^{DPA}(v) &= G(v) \beta(v).
    \end{align}
    By the Payoff Equivalence theorem (\cref{RET:supply_demand_uncertain}), the payments in Equations $\eqref{paymentUPA}$ and $\eqref{paymentDPA}$ are equal. Hence
    \begin{align}\label{bid_func:sup_dem_uncertain}
        \notag G(v) \beta(v) &= \sum_{n = \underline{n}}^{\bar{n}} \sum_{k = \underline{k}}^{\bar{k}} h(n,k) F^{(n-1)}_k(v) \E[Y^{(n-1)}_k | Y^{(n-1)}_k < v]\\
        \beta(v) &= \sum_{n = \underline{n}}^{\bar{n}} \sum_{k = \underline{k}}^{\bar{k}} \frac{h(n,k)  F^{(n-1)}_k(v)}{G(v)} \times \E[Y^{(n-1)}_k | Y^{(n-1)}_k < v].
    \end{align}
    We conclude that the bidding function in a discriminatory price auction with supply and demand uncertainty is a weighted average of the bidding functions with known supply and demand, weighted by the probability of the state times the probability the agent wins in the state. 
\end{proof}

\begin{proof}[Proof of \cref{lem:t_AE-continuous}]
    Let $\mathcal{N} = \{\underline{n}, \ldots,  \bar{n}\}$ denote the set of potential agents and let $\mathcal{K} = \{\underline{k}, \ldots,  \bar{k}\}$ represent the set of potential number of goods available. 
    Agents' beliefs that $n$ agents and $k$ items remain are given by $(n, k) \sim H \in \Delta(\mathcal{N} \times \mathcal{K})$ with probability mass function $h(n,k)$. Any belief $\mu$ determines one such distribution $h$; in this distribution, $h(n,k)>0$ implies $n > k$ (because there are initially more agents than items).\footnote{To be precise, if there are initially $N$ agents and $K$ items, then the distribution is defined by $h(N - K + k, k) = \mu(k)$.}
    Agents have values drawn independently and identically from $\bar{F} = F|_{[0, v]}$, denoted by random variable $\bar{Y}$. 
    By \cref{RET:supply_demand_uncertain}, for any assortatively efficient mechanism a type-$v$ agent has expected transfer given by \cref{paymentUPA}:
    \begin{align*}
        t_{AE}(v|\mu, v) &= \sum_{n = \underline{n} }^{\bar{n}} \sum_{k = \underline{k}}^{\bar{k}} h(n,k)  \bar{F}^{(n-1)}_k(v) \E[\bar{Y}^{(n-1)}_k | \bar{Y}^{(n-1)}_k < v].
    \end{align*}
    Since $v$ is the highest type, they win with probability $1$, and the above expression simplifies to 
    \begin{align*}
        t_{AE}(v|\mu, v) &= \sum_{n = \underline{n} }^{\bar{n}} \sum_{k = \underline{k}}^{\bar{k}} h(n,k)  \E[\bar{Y}^{(n-1)}_k].
    \end{align*}
    It is thus sufficient to show $\E[\bar{Y}^{(n-1)}_k]$ is continuous and strictly increasing in $v$. 
    Since $\bar{F}$ is a truncated version of $F$, as the upper bound of the support increases, so does the expectation of each draw. For continuity, note that 
    \begin{align*}
        \E[\bar{Y}^{(n-1)}_k] &= \int_0^v x \frac{\partial}{\partial x}\left[\bar{F}^{(n-1)}_{k}(x)\right]dx\\
        &= v \bar{F}^{(n-1)}_{k}(v) - \int_0^v \bar{F}^{(n-1)}_{k}(x)dx \\
        &= v - \int_0^v \left[\sum_{j = 0}^{k-1} \binom{n-1}{j}\left(\frac{F(x)}{F(v)}\right)^{n-1-j}
        \left(1 - \left(\frac{F(x)}{F(v)}\right)\right)^j\right]dx.
    \end{align*}
    The second line integrates by parts, and the third line uses the fact that $\bar{F}^{(n-1)}_{k}(v) = 1$ and expands the term under the integral. 
    Since $F$ is atomless, it is continuous, hence the integrand is continuous, as is the whole expression.\footnote{As $v\rightarrow 0$, the support of $\bar{F}$ collapses to $0$, and the payment is also $0$.} 
\end{proof}

\begin{proof}[Proof of \cref{lem:t_AE-FOSD-dominance}]
     Recall that the game begins with $n$ agents and $k$ items available. If $\kappa$ items remain, there must be $n - k + \kappa$ remaining agents. Because $v$ is the highest remaining type, agents have values drawn independently and identically from $F|_{[0, v]}$. Consider a uniform-price auction: the goods are allocated to the $\kappa$ bidders with the highest bids at the price of the $(\kappa+1)^{\text{st}}$-highest bid. It is a weakly dominant strategy to bid truthfully, in which case the auction is assortatively efficient. An agent with value $v$ then has an expected payment conditional on winning of $\E[Y_{\kappa}^{(n - k + \kappa - 1)} | Y_{\kappa}^{(n - k + \kappa - 1)} < v] = \E[Y_{\kappa}^{(n - k + \kappa - 1)}]$, where $Y$ is a random variable drawn from $F|_{[0, v]}$. Averaging the expected payment conditional on winning across all states,
    \begin{align*}
        t_{AE}(v | \mu, v) &= \frac{\sum_{\kappa=1}^k \mu(\kappa) \cdot Pr(Y_{\kappa}^{(n - k + \kappa - 1)} < v) \cdot  \E[Y_{\kappa}^{(n - k + \kappa - 1)} | Y_{\kappa}^{(n - k + \kappa - 1)} < v]}{\sum_{\kappa=1}^k \mu(\kappa) \cdot Pr(Y_{\kappa}^{(n - k + \kappa - 1)} < v) }. \\
         &= \sum_{\kappa=1}^k \mu(\kappa) \cdot \E[Y_{\kappa}^{(n - k + \kappa - 1)}]. 
    \end{align*}
    Observe that $\E[Y_{\kappa}^{(n - k + \kappa - 1)}]$ is decreasing in $\kappa$.\footnote{Compare the expression for $\kappa$ and $\kappa + 1$. For any realization of $n - k + \kappa - 1$ value draws, consider adding one more value draw. If the new draw exceeds $Y_{\kappa}^{(n - k + \kappa - 1)}$, then $Y_{\kappa+1}^{(n - k + (\kappa + 1) - 1)} = Y_{\kappa}^{(n - k + \kappa - 1)}$. Otherwise, $Y_{\kappa+1}^{(n - k + (\kappa + 1) - 1)} < Y_{\kappa}^{(n - k + \kappa - 1)}$. Hence the expected value is decreasing in $\kappa$.} Thus 
    \begin{align*}
        t_{AE}(v | \mu, v) & = 
        \E_{\kappa \sim \mu}[\E[Y_{\kappa}^{(n - k + \kappa - 1)}]] \\
        & < \E_{\kappa \sim \mu'}[\E[Y_{\kappa}^{(n - k + \kappa - 1)}]] \\
        & = t_{AE}(v | \mu', v),
    \end{align*}
    where the inequality follows from strict FOSD dominance.
\end{proof}

\begin{proof}[Proof of \cref{lem:transfer-determines-entry-time}]
    Intuitively, if we pause the game at any time, we know the allocation of all remaining items is still assortatively efficient from that time onward. An agent's expected transfer is then pinned down by the Payoff Equivalence theorem. This expected transfer determines the time at which the agent must attempt to join the queue.
    
    Concretely, the given equilibrium of $Q(p)$ is assortatively efficient, so at any point in the game, the allocation of all remaining items must also be assortatively efficient. By the revelation principle, the equilibrium from this point onward thus corresponds to a direct mechanism with assortatively efficient allocation. By \cref{RET:supply_demand_uncertain}, in this direct mechanism, an agent's expected transfer is identical to her expected transfer in the continuation of the queueing game. Hence by \cref{RET:supply_demand_uncertain}, when type $v$ is in the game with beliefs $\mu$ and highest type $\hat{v}$, her expected transfer (conditional on winning) is $t_{AE}(v | \mu, \hat{v})$.

    \textbf{Forward Implication}: Suppose type $v$ attempts to join the queue at some $\tilde{t} \in \{t^+, t^-, t^{--}, \ldots\}$ By assortative efficiency, $v$ must be the highest type still remaining in the game (otherwise, all types higher than $v$ remaining in the game would also attempt to join at time $t$, yielding assortative inefficiency). Thus type $v$'s expected payment conditional on winning is $t_{AE}(v | \mu_{\tilde{t}}, v)$. But if he attempts to enter at time $\tilde{t}$, his payment conditional on winning is $t$, hence $t = t_{AE}(v | \mu_{\tilde{t}}, v)$.\footnote{Note that the agent cannot be playing a strategy that mixes between joining at time $\tilde{t}$ and joining at a later time. If she did, she must join before all other agents with probability $1$ (by assortative efficiency), hence she would always prefer to join at the later time instead of at $\tilde{t}$.}

    \textbf{Reverse Implication}: Suppose $t_{AE}(v | \mu_{\tilde{t}}, v) = t$ for some $\tilde{t} \in \{t^+, t^-, t^{--}, ...\}$. Then type $v$ must be the highest type remaining in the game at time $\tilde{t}$: otherwise, the highest type remaining in the game $\hat{v}$ would have expected payment (conditional on winning) of $t_{AE}(\hat{v} | \mu_{\tilde{t}}, \hat{v}) > t$, which is a contradiction because type $\hat{v}$ can no longer pay more than $t$ in the queueing game. Therefore, by \cref{RET:supply_demand_uncertain}, type $v$ has expected payment (conditional on winning) in the queueing game of $t_{AE}(v | \mu_{\tilde{t}}, v) = t$. If type $v$ waits until any time $t - \delta < t$ before attempting to join the queue, his expected payment conditional on winning will be strictly lower than $t$, contradicting the prior sentence. We conclude that type $v$ must attempt to join the queue at time $t$.
\end{proof}

\begin{proof}[Proof of \cref{lem:hazard-rate-efficiency}]
    Let $\vartheta$ denote the inverse hazard rate of the value distribution. By \cite{haro08} Lemma 2.6, the expected total surplus from allocation rule $x$ is $\sum_{i=1}^n\E_{\bm{v}}[\vartheta(v_i) x_i(\bm{v})] = \E_{\bm{v}}[\sum_{i=1}^n\vartheta(v_i) x_i(\bm{v})]$. 
    
    Suppose that the hazard rate of the value distribution is decreasing, so $\vartheta$ is increasing. For any realization of values $\bm{v}$, an assortatively efficient mechanism allocates goods to the agents with the highest values of $v_i$ and hence the highest values of $\vartheta(v_i)$. Thus an assortatively efficient mechanism yields weakly higher surplus than any other mechanism. Furthermore, if $\vartheta$ is strictly increasing, an assortatively efficient mechanism yields higher surplus than any mechanism that is not assortatively efficient.

    Suppose instead that the hazard rate of the value distribution is increasing, so $\vartheta$ is decreasing. For any realization of values $\bm{v}$, an assortatively efficient mechanism allocates goods to the agents with the highest values of $v_i$. These agents now have the lowest values of $\vartheta(v_i)$. Hence an assortatively efficient mechanism yields the \emph{lowest} expected utility of any mechanism that always allocates the good. Hence any mechanism with a different allocation rule which always allocates the good will provide a weakly higher surplus. This inequality is strict when $\vartheta$ is strictly decreasing.
\end{proof}

\begin{proof}[Proof of \cref{prop:entry-cost-no-information}]
    Because there are no messages, the game is equivalent to a discriminatory price auction in which each agent may choose to pay cost $c$ to enter the auction and submit a sealed bid (his entry time). By Topkis's theorem, there exists some threshold type $v^R$ such that agents enter if and only if their value is above $v^R$. By continuity, a type-$v^R$ agent must be indifferent between entering or not entering, and therefore obtains zero expected utility if he enters. If $v^R \leq c$, then he obtains a strictly negative expected utility: he always pays $c$ to enter, and with some probability less than $1$ he receives an item with value at most $c$. Hence $v^R > c$.
\end{proof}

\begin{proof}[Proof of \cref{prop:entry-cost-ae}]
    One equilibrium is the following. Agents with values less than $c$ do not participate. All other agents write down a bid $\beta(v)$ given by their bidding function in a sealed-bid discriminatory price auction with reserve price $c$. In the auction equilibrium, the bidding function is given by
    \begin{align*}
        \beta(v) = \E[\max(c, Y_k^{(n-1)}) | \max(c, Y_k^{(n-1)}) < v].
    \end{align*}
    Participating agents join the queue at time $\beta(v) - c$, if the announcer has not yet announced that the queue is full. Otherwise, they do not attempt to join the queue.

    Agents' expected allocations and payoffs are equivalent to their expected allocations and payoffs in the equilibrium of the sealed-bid discriminatory price auction with reserve price of $c$. There are also no profitable deviations that do not mimic another type. Hence agents have no profitable deviations. Furthermore, the equilibrium bidding functions are monotonically increasing for agents with values above $c$, hence the equilibrium is assortatively efficient.

    We next show that this is the unique symmetric equilibrium. In any symmetric equilibrium of the game, multiple types cannot pool by bidding the same value, because any such type could profitably deviate to increase their bid by some $\varepsilon > 0$. Furthermore, agents cannot play mixed strategies in equilibrium. Suppose some agent mixed between bids $b$ and $b' > b$. By Topkis's theorem, no other type can bid in the interval $[b, b']$. But then the agent should strictly prefer bidding $b$ to bidding $b'$. We conclude that agents' strategies are pure and fully separating. By Topkis's theorem, agents' bidding strategies are therefore strictly increasing in their values. Because the equilibrium is symmetric, the allocation is assortatively efficient. By the Payoff Equivalence theorem, any equilibrium bidding strategy must therefore correspond to $\beta$.
\end{proof}

\begin{proof}[Proof of \cref{prop:entry-cost-sudden-bad-news}]
    Fix information policy $p$ such that $p$ always announces when the queue is full ($p$ may make other announcements as well). 
    We will consider two games and show that any assortatively efficient equilibrium in the first game has a corresponding equilibrium in the second game.
    The first game $(\mathcal{Q}_c(p), F)$ is the entry cost queueing game in which agents have values drawn from atomless $F$ with full support on $[0, \bar{v}]$ and must pay cost $c>0$ to attempt to enter the queue. The second game $(\mathcal{Q}(p), G)$ is the standard queueing game in which agents do not pay an entry cost and have values drawn from distribution $G$ defined by $G(v) = F(v+c)$.\footnote{Note that this allows negative values in the support, but we otherwise define the standard queueing game identically.} Intuitively, the second game eliminates the entry cost but reduces each player's value by $c$. 
    
    Assume for the sake of contradiction that there exists some assortatively efficient equilibrium of $(\mathcal{Q}_c(p), F)$ in which sudden bad news is released with positive probability; let this equilibrium strategy profile be denoted $S^c = \times S_i^c$. Each agent $i$ follows strategy $S_i^c: (h^M, t, v) \mapsto b$, with types $v\leq c$ choosing never to join the queue. Let $S$ be a strategy profile of $(\mathcal{Q}(p),G)$ defined by an agent with value $v$ using the strategy of an agent with value $v + c$ in the strategy profile $S^c$ of $(\mathcal{Q}_c(p),F)$. That is, for each player $i$,
    \begin{align*}
        S_i(h^M,t,v) = S_i^c(h^M,t,v+c). 
    \end{align*}

    We claim that if all agents play according to strategy profile $S$ in $(\mathcal{Q}(p),G)$, then (i) the allocation is assortatively efficient, (ii) sudden bad news is released with positive probability, and (iii) $S$ is an equilibrium. This will contradict a slightly modified version of \cref{thm:FOSD-assortative-efficiency}.
    
    Given value profile $\bm{v}=\{v_1,v_2,...,v_n\}$ for all players' types, let $\bm{v}+c$ denote the value profile $\{v_1+c,v_2+c,...,v_n+c\}$. The type realizations $\bm{v}$ in $(\mathcal{Q}(p),G)$ map bijectively to the type realizations $\bm{v}+c$ in $(\mathcal{Q}_c(p),F)$, with the same probability of occurrence.\footnote{More formally, the pushforward of the measure defined by $G$ by the function $\bm{v} \mapsto \bm{v}+c$ is exactly the measure defined by $F$.} Because type $v_i$ in $(\mathcal{Q}(p),G)$ follows the same strategy as type $v_i+c$ in $(\mathcal{Q}_c(p),F)$, the realized strategies also map bijectively. Hence the path of entry times is identical between the conjectured equilibrium $S$ and the equilibrium $S^c$. 
    Thus when agents play according to $S$, (i) the allocation is assortatively efficient and (ii) $p$ announces sudden bad news with positive probability (because $p$ is identical in both games).

    It remains to show that $S$ constitutes an equilibrium. We claim that if $v_i$ has a profitable deviation from the conjectured equilibrium $S$ in $(\mathcal{Q}(p),G)$, then type $v_i+c$ has a profitable deviation from the known equilibrium $S^c$ in $(\mathcal{Q}_c(p),F)$.
    
    Consider a strategy $\tilde{S}_i$ adopted by type $v_i$ in the game $(\mathcal{Q}(p),G)$ when all other agents follow strategy $S_{-i}$. 
    Without loss of generality, we may restrict deviations to those in which the agent does not join the queue after the announcer declares that the queue is full.\footnote{Joining after the queue is full does not change an agent's payoff in $(\mathcal{Q}(p),G)$, but in $(\mathcal{Q}_c(p),F)$, the agent would lose utility.}
    
    Now suppose type $v_i+c$ plays according to strategy $\tilde{S}_i$ in $(\mathcal{Q}_c(p),F)$. Then, abusing notation slightly, the sequence of entry times and announcements given by $(\tilde{S}_i(v_i),S_{-i}(\bm{v}_{-i}))$ in the game $(\mathcal{Q}(p),G)$ is identical to the sequence of entry times and announcements given by $(\tilde{S}_i(v_i+c),S^c_{-i}(\bm{v}_{-i}+c))$ in the game $(\mathcal{Q}_c(p),F)$, for any type profile $\bm{v}_{-i}$. Hence type $v_i+c$ wins the good in $(\mathcal{Q}_c(p),F)$ in the exact same cases in which type $v_i$ wins the good in $(\mathcal{Q}(p),G)$, at the exact same price (plus the entry-cost $c$), and never pays an entry cost when she does not obtain the good.\footnote{Because the equilibrium $S^c$ of $(\mathcal{Q}_c(p),F)$ is assortatively efficient, at most a measure-zero mass of agents attempts to join at any time; hence if agent $i$ attempts to join the queue under this proposed deviation, he is successful with probability $1$.} 
    Because $\bm{v}\mapsto \bm{v}+c$ is a bijection that preserves probabilities, the agents' expected payoffs are 
    identical in both games.\footnote{The same logic applies whether $\tilde{S}_i$ is a proposed deviation or the conjectured equilibrium action $S_i$.} Hence if $\tilde{S}_i$ is a profitable deviation in $(\mathcal{Q}(p),G)$ when agents play $S_{-i}$, it is a profitable deviation in $(\mathcal{Q}_c(p),F)$ when agents play $S^c_{-i}$, a contradiction. Hence we have shown (iii), that $S$ is an equilibrium of $(\mathcal{Q}(p),G)$.
    
    We conclude that $S$ is an assortatively efficient equilibrium of $(\mathcal{Q}(p),G)$ in which sudden bad news is released with positive probability. Now observe that if $S$ is an equilibrium of $(\mathcal{Q}(p),G)$ it is also an equilibrium of $(\mathcal{Q}(p),H)$, where $H$ is the lower truncation of $G$ at $0$ (that is, any agent with negative value is assigned value $0$). Then applying the following lemma (\cref{lem:zero-atom-assortative-inefficiency}) delivers the result.

        \begin{lemma}\label{lem:zero-atom-assortative-inefficiency}
        Fix information policy $p$ and let $F$ be a distribution with full support on $[0,\bar{v}]$ and no atoms at positive values. Then any equilibrium of $\mathcal{Q}(p)$ with a positive probability of revealing sudden bad news is assortatively inefficient. 
    \end{lemma}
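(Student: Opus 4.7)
The plan is to reuse the proof of \cref{thm:FOSD-assortative-efficiency} almost verbatim, with one additional argument to rule out a corner case. I would first verify that \cref{RET:supply_demand_uncertain}, \cref{lem:t_AE-continuous}, \cref{lem:t_AE-FOSD-dominance}, and \cref{lem:transfer-determines-entry-time} remain valid when $F$ has an atom at zero but is atomless on $(0, \bar{v}]$. The envelope argument underlying Payoff Equivalence still applies because a type-$0$ agent has zero expected payment and zero probability of winning against higher-value competitors; the integration-by-parts calculation in the proof of \cref{lem:t_AE-continuous} relies only on $F$ being continuous on $(0, \bar{v}]$; and the monotonicity of conditional order statistics used in \cref{lem:t_AE-FOSD-dominance} is preserved under an atom at zero.

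I would then assume for contradiction that the equilibrium is assortatively efficient and that sudden bad news is revealed at some time $t > 0$ with positive probability. Let $\hat{v}$ denote the highest type that could still be in the game at $t^+$ under equilibrium strategies. Once one establishes that $\hat{v} > 0$, the proof of \cref{thm:FOSD-assortative-efficiency} carries through unchanged: one shows $t_{AE}(\hat{v} | \mu_{t^+}, \hat{v}) = t$, applies FOSD to obtain $t_{AE}(\hat{v} | \mu_{t^-}, \hat{v}) > t$, and then uses continuity to locate a type $v < \hat{v}$ still in the game whose expected payment conditional on winning exceeds $t$, contradicting the fact that the clock has already reached $t$.

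The new step is handling the corner case $\hat{v} = 0$. Here $t_{AE}(0 | \mu_{t^+}, 0) = 0 < t$, so by continuity there exists $v' > 0$ sufficiently small with $t_{AE}(v' | \mu_{t^+}, v') < t$. Sudden bad news together with \cref{lem:t_AE-FOSD-dominance} yields $t_{AE}(v' | \mu_{t'}, v') \leq t_{AE}(v' | \mu_{t^+}, v') < t < t'$ for all $t' > t$, so by \cref{lem:transfer-determines-entry-time} type $v'$ could not have entered at any earlier time. Thus $v' > 0 = \hat{v}$ is still in the game, contradicting the maximality of $\hat{v}$. This rules out the corner case, so $\hat{v} > 0$ and the argument above applies.

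The main subtle point is confirming that continuity of $v' \mapsto t_{AE}(v' | \mu, v')$ extends cleanly through $v' = 0$ despite the atom: as $v' \to 0^+$ the truncated distribution $F|_{[0, v']}$ concentrates on the atom at zero, so every order-statistic expectation vanishes and $t_{AE}(v' | \mu, v') \to 0$, which is exactly what the corner-case argument requires. Everything else transfers directly from the proof of \cref{thm:FOSD-assortative-efficiency}.
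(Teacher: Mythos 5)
Your proposal is correct and follows essentially the same route as the paper: identify the two places where atomlessness of $F$ is used (Payoff Equivalence and the continuity/monotonicity of $t_{AE}$ in \cref{lem:t_AE-continuous}), check that an atom at zero breaks neither, and then rerun the proof of \cref{thm:FOSD-assortative-efficiency} unchanged. Your explicit treatment of the corner case $\hat{v}=0$ is a reasonable piece of extra care, but it is already subsumed by the original argument establishing $t_{AE}(\hat{v}\mid\mu_{t^+},\hat{v})=t$ (which, for $\hat{v}=0$, produces exactly the contradiction with maximality that you describe), so the paper leaves it implicit.
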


    The lemma is identical to \cref{thm:FOSD-assortative-efficiency} except that it allows for $F$ with an atom at $0$. 

    \begin{proof}
        The proof of \cref{thm:FOSD-assortative-efficiency} only uses the fact that $F$ is atomless twice: when proving \cref{RET:supply_demand_uncertain} (the Payoff Equivalence theorem), and 
        \cref{lem:t_AE-continuous} (to show the continuity of $t_{AE}$). It therefore suffices to show that neither result breaks when $F$ has an atom at zero. For the Payoff Equivalence theorem, an agent of type $0$ will get a payoff that is less than or equal to $0$, and thus the best they can do is choose to never enter. The rest of the requirements of Payoff Equivalence are unaffected by the atom at $0$. For \cref{lem:t_AE-continuous}, we only use the fact that $F$ is atomless in the final line, when showing that the final equation is continuous. An atom at $0$ does not affect the argument. Further, $t_{AE}(v|\mu, v)$ is still strictly increasing in $v$ since an agent of type $v = 0$ still chooses never to enter and thus receives payoff $0$. The rest of the proof of \cref{thm:FOSD-assortative-efficiency} goes through unchanged. 
    \end{proof}
\end{proof}

\begin{proof}[Proof of \cref{lem:hazard-rate-efficiency-all-mechs}]
    Because the mechanism is individually rational and makes no positive transfers, all types $v<0$ have utility zero and are never allocated the good. Let $x(\theta)$ denote the ex-interim probability that type $\theta$ is allocated the good in equilibrium. By the envelope theorem,
    \begin{align*}
        u(v) = \int_{\underline{v}}^{v} x(\theta) d\theta = \int_0^{v}x(\theta)d\theta.
    \end{align*}
    Again by standard envelope theorem arguments, the expected total surplus from the mechanism is
    \begin{align*}
        \E[U(x)] & = \int_{\underline{v}}^{\overline{v}} u(\theta) f(\theta)d\theta \\
        & = \int_{0}^{\overline{v}} u(\theta) f(\theta)d\theta \\
        & = u(\theta)F(\theta)\big|_0^{\overline{v}} - \int_0^{\overline{v}} u'(\theta) F(\theta) d\theta \\
        & = u(\overline{v}) - \int_0^{\overline{v}} x(\theta)F(\theta) d\theta \\
        & = \int_0^{\overline{v}} x(\theta) d\theta - \int_0^{\overline{v}} x(\theta)F(\theta)d\theta \\
        & = \int_0^{\overline{v}} \frac{1-F(\theta)}{f(\theta)} x(\theta) f(\theta) d\theta.
    \end{align*}
    The final expression is maximized pointwise by allocating the good to the agents with the highest inverse hazard rate, $\frac{1-F(\theta)}{f(\theta)}$. If the inverse hazard rate is weakly (strictly) increasing, then the assortatively efficient allocation rule has weakly (strictly) higher expected total surplus than any other allocation rule.
\end{proof}

\section{Continuous Bad News}

Our proof strategy for \cref{prop:cont-bad-news} is as follows: first we present the bidding functions and show that they result in an assortatively efficient allocation. As a result, we can immediately rule out any behavior that always imitates the strategy of a given type; if such a deviation were profitable, it would be profitable in any assortatively efficient incentive compatible mechanism.\footnote{Formally, the result follows from the revelation principle.} By the same logic, after a message arrives, the best an agent can do is follow the strategy prescribed by their type (or join immediately if they have a value higher than that of the highest possible type still in the game). This is because after a message arrives, it is common knowledge that agents have value distributed independently and identically on $U([0, \hat{v}])$, so one equilibrium is to enter according to \eqref{discrimbid}—precisely our prescribed bidding function.
That only leaves deviations before a message arrives, but after $\tau$. We analytically verify the remaining deviation case. 

\begin{proof}[Proof of \cref{prop:cont-bad-news}]
  There are three bidding functions that characterize the equilibrium. The initial bidding function $b_I$ is the bidding function before time $\tau$. The ``no news'' bidding function $b_{\text{NN}}$ describes how agents behave after time $\tau$, conditional on having not received a message so far. The ``yes news'' bidding function $b_{\text{YN}}$ describes how agents behave after they receive a message. 
  In other words: before $\tau$ agents enter according to $b_I$, after $\tau$ but before any message is received agents enter according to $b_{\text{NN}}$, and if a message is ever received, agents enter according to $b_{\text{YN}}$.

  Let $\hat{v} = 1/2$. Note that $\hat{v}$ is the last type to enter before $\tau$: $b_I(1/2) = 2/9 = \tau$. Further, define
  \begin{align*}
      \gamma(v) := \sqrt{144 v^4-216 v^3+33 v^2+72 v+16}.
  \end{align*}
  Then the equilibrium bidding functions are:
  \begin{align}
    b_{\text{I}}(v) & = \frac{v(3-2v)}{3(2-v)} \\
    b_{\text{YN}}(v) &= \frac{v(3\hat{v} - 2v)}{3(2\hat{v} - v)} \\
    b_{\text{NN}}(v) &= \frac{\gamma(v) - 12v^2 + 9v - 4}{36 (1-v)}.
  \end{align}

  We call the strategy profile above $s$, and the collective strategy profile $S = \bigtimes s$. 
  First, we show that $S$ results in an assortatively efficient allocation, and that $S$ is feasible—agents are never prescribed to enter earlier than the current time. 
  To see that $S$ is feasible, observe that 
  \begin{enumerate}[(i)]
      \item  $b_I(\hat{v}) = b_{\text{NN}}(\hat{v})= \tau > b_{\text{YN}}(\hat{v})$
      \item $b_I$ is strictly increasing
      \item $b_{\text{NN}}$ and $ b_{\text{YN}}$ are strictly increasing on $[0, \hat{v}]$ with $b_{\text{NN}}(v) > b_{\text{YN}}(v)$. 
  \end{enumerate} 
  By (i), any agent with type higher than $\hat{v}$ enters the game before time $\tau$. By (ii), after time $\tau$, no type is called to enter at time $t > \tau$. By (iii), after a message is received, agents' prescribed bids decrease. Also note that because each function is strictly increasing in the relevant domain, each has a well-defined inverse function. 
  
  Suppose a message arrives at time $\tau'$. Agents with values greater than $b_{\text{NN}}^{-1}(\tau')$ have already entered the queue; all remaining agents have values of at most $b_{\text{NN}}^{-1}(\tau')$. Following $s$, agents bid $b_{\text{YN}}(v) < b_{\text{NN}}(v) < \tau'$, so bids are feasible. Further, $S$ yields an assortatively efficient allocation because each bidding function is strictly increasing.
  
  To rule out deviations, we take the perspective of an agent with type $v$ and consider different forms of deviations. 
  We begin by demonstrating that he cannot profitably deviate by following the strategy of type $v'\neq v$.
  To do this, we show that if each type follows strategy $s$, his expected allocation and transfer is identical to his expected allocation and transfer in an incentive compatible (IC) and individually rational (IR) assortatively efficient mechanism $M_{AE}$. Hence if he could profitably play $s(v')$, he would also have a profitable deviation in an IC mechanism, a contradiction.

  First observe that expected allocations are the same as in $M_{AE}$ because the claimed equilibrium is assortatively efficient. Next observe that $t_{AE}(v)$, an agent's expected transfers conditional on winning in $M_{AE}$, is simply our expression for $b_{\text{I}}(v)$.\footnote{By \cref{discrimbid}.} Hence if $v \geq \hat{v}$, their expected transfer and allocation under strategy profile $S$ is identical to that of $M_{AE}$. If $v \leq \hat{v}$ we compute his expected transfer as follows, in two cases.

  If no message arrives before time $b_{\text{NN}}(v)$, the agent will enter at time $b_{\text{NN}}(v)$. The probability that no message arrives before $b_{\text{NN}}(v)$ and that he wins an item is:
  
  \begin{align*}
      \mathbb{P}(\text{Win, NN}; v) = 2 \hat{v} (1 - \hat{v}) \left( \frac{v}{\hat{v}} \right) \cdot 1 + \hat{v}^2 \left( \frac{b_{\text{NN}}(v)}{\tau} \right) \left( 1 - \left(1 - \frac{v}{\hat{v}} \right)^2 \right).
  \end{align*}

  The first term represents the case in which one other bidder arrived before time $\tau$. The second term represents the case in which no other bidders arrived before time $\tau$, but the message $m$ did not arrive before time $b_{\text{NN}}(v)$.

  If instead the message arrives before $b_{\text{NN}}(v)$, the agent will attempt to enter the queue at time $b_{\text{YN}}(v) < b_{\text{NN}}(v)$. The probability that the message arrives before $b_{\text{NN}}(v)$ and the agent wins an item is:

  $$
  \mathbb{P}(\text{Win, YN};v) = \hat{v}^2 \left( 1 - \frac{b_{\text{NN}}(v)}{\tau} \right) \left( 1 - \left(1 - \frac{v}{\hat{v}} \right)^2 \right).
  $$

    In this expression, we condition on the probability that no other bidder arrived before time $\tau$, then on the message arriving before time $b_{\text{NN}}(v)$, and finally on the bidder having one of the two highest values.

  The agent's expected equilibrium transfer is:
  \begin{align*}
    \mathbb{P}(\text{Win, NN}; v) \cdot b_{\text{NN}}(v) + \mathbb{P}(\text{Win, YN}; v) \cdot b_{\text{YN}}(v) 
    &= \frac{v(3 - 2v)}{3(2 - v)} \cdot \left(1 - (1 - v)^2\right) \\
    & = t_{AE}(v) \cdot \left(1 - (1 - v)^2\right).
  \end{align*}
  
  Recall that $1-(1-v)^2$ is the probability a type-$v$ agent wins an item in $M_{AE}$, so the final expression represents the expected transfer of a type-$v$ agent in $M_{AE}$. We conclude that for both $v<\hat{v}$ and $v>\hat{v}$, an agent's expected allocation and transfer under $S$ is the same as in $M_{AE}$.

  Finally, we show that agents do not have a profitable deviation that consists of strategies not used by any type. As usual, we may restrict our attention to pure-strategy deviations. 
  We can immediately rule out some deviations. 
  
  First consider deviations in which agents bid higher than $\tau$ (at some time different from $b_I(v)$). Such a deviation would be equivalent to imitating some type $v' > \hat{v}$, which we have already shown cannot be profitable.\footnote{This strategy is equivalent to imitating $v'$ regardless of what the deviation would prescribe after time $\tau$: by the time the clock reaches $\tau$, the agent would have already entered. Also note that entering before the highest type agent would enter is strictly worse.} 

  Next consider an agent's behavior after a message has arrived at time $\tau' < \tau$. We claim that after the message has arrived, a type $v$ agent optimally bids according to
  \begin{align} \label{modified_bYN_for_deviators}
      b(v) &= 
      \begin{cases}
          b_{\text{YN}}(v) &\text{ if } v\leq b_{\text{NN}}^{-1}(\tau') \\
          b_{\text{YN}}(b_{\text{NN}}^{-1}(\tau')) &\text{ if } v > b_{\text{NN}}^{-1}(\tau'). 
      \end{cases}
  \end{align}
  In other words, an agent should follow his prescribed strategy if his type ``should'' still be in the game (i.e., he isn't a high type deviating downward), and otherwise he should enter when the highest type still in the game enters.

  No agent should bid more than $b_{\text{YN}}(b_{\text{NN}}^{-1}(\tau'))$: All types still in the game have values $v < b_{\text{NN}}^{-1}(\tau')$, so bidding more increases payments without changing the probability of winning. 
  We next rule out deviations to bids within $[0, b_{\text{YN}}(b_{\text{NN}}^{-1}(\tau'))]$. 
  After receiving a message, it is common knowledge that no agent arrived before $\tau$, which means agents' values are distributed uniformly on $[0, \hat{v}]$. Note that $b_{\text{YN}}$ is the expected payment conditional on winning in $M_{AE}$ when all agents have values distributed uniformly on $U([0,\hat{v}])$. Note further that all agents with types greater than $b_{\text{YN}}(b_{\text{NN}}^{-1}(\tau'))$ have already entered. 
  
  Suppose the agent deviates to enter at time $t \in [0, b_{\text{YN}}(b_{\text{NN}}^{-1}(\tau'))]$. Then $t = b_{\text{YN}}^{-1}(v')$ for some $v'$, and hence the agent obtains the same expected payoff and transfer as agent $v'$ would in $M_{AE}$ in which all agents have values distributed uniformly on $U([0,\hat{v}])$. If $v < b_{\text{NN}}^{-1}(\tau')$, then the type-$v$ agent could make the same deviation in $M_{AE}$, so it cannot be profitable. 
  If $v > b_{\text{NN}}^{-1}(\tau')$, then by Topkis's theorem, the agent's optimal action is at least the optimal action of $b_{\text{NN}}^{-1}(\tau')$—that is, $b_{\text{YN}}(b_{\text{NN}}^{-1}(\tau'))$. By the logic above, he will never bid above this value, so bidding exactly $b_{\text{YN}}(b_{\text{NN}}^{-1}(\tau'))$ is optimal. Hence once a message arrives, following \eqref{modified_bYN_for_deviators} is optimal. 

  Finally, we consider deviations in which an agent enters after $\tau$ but before news arrives (by the previous argument, after a message arrives, the best they can do is to follow \eqref{modified_bYN_for_deviators}). Such a deviation does not imitate the strategy profile of another type, and thus we must rule it out explicitly. 
  
  We consider two cases: the agent deviates to bid above $b_{\text{NN}}(v)$, and the agent deviates to bid below $b_{\text{NN}}(v)$.

  In the former case, the agent bids $b_{\text{NN}}(v')$ for some $v' > v$. If news arrives before $b_{\text{NN}}(v')$, then as shown above, his optimal strategy is to bid $b_{\text{YN}}(v)$.\footnote{Note that for this case, $b_{\text{YN}}(v) < b_{\text{YN}}(b_{\text{NN}}^{-1}(\tau'))$ if $\tau'$ arrives before $b_{\text{NN}}(v')$.}

  We then compute his expected payoff with the following expression, letting $\pi^U(v';v)$ denote the  expected utility of an agent with type $v$ who bids according to $b_{\text{NN}}(v')$:
  \begin{align}\label{util_NN_upward_deviation}
  \pi^U(v';v) & =  \hat{v}^2 \cdot \left(1 - \frac{b_{\text{NN}}(v')}{\tau}\right) \cdot \left(1 - \left( \frac{\hat{v} - v}{\hat{v}} \right)^2 \right) \cdot (v - b_{\text{YN}}(v)) \\
  &\quad + \hat{v}^2 \cdot \left( \frac{b_{\text{NN}}(v')}{\tau} \right) \cdot \left(1 - \left( \frac{\hat{v} - v'}{\hat{v}} \right)^2 \right) \cdot (v - b_{\text{NN}}(v')) \nonumber \\
  &\quad + 2\hat{v}(1 - \hat{v}) \cdot \left( \frac{v'}{\hat{v}} \right) \cdot (v - b_{\text{NN}}(v')). \nonumber
  \end{align}

  We have broken the expression into three terms. 
  The first term captures the case in which neither of the other two agents arrived in the queue before $\tau$, and a message arrived before the agent entered (i.e., before time $b_{\text{NN}}(v')$). In this case, the agent then joins at the time prescribed by $b_{\text{YN}}$. 
  The second term represents the case in which neither other agent arrived in the queue before $\tau$, but no message arrives before $b_{\text{NN}}(v')$. The third term represents the case in which at least one agent joined the queue before time $\tau$ (note that if both other agents joined the queue before $\tau$, then our agent gets payoff $0$).

  It remains to show that for any $v$ and $v'>v$ in the appropriate domains, (\ref{util_NN_upward_deviation}) is strictly less than the utility from following $s$. We subtract the latter expression from the former and show that the expression is strictly negative by finding the roots of the equation. 
  An agent's additional payoff from deviating to $v' > v$ is:
  \begin{align}\label{upward_util_minus_AE_util}
    \pi^U(v';v)-\pi^U(v;v) = \frac{(v-v')^2}{48 (v'-1)}\times \big( &-2 v \left(-12 v'^2+\gamma(v')+9 v'-4\right) +48 v'^3 \\
    &-72 v'^2+\left(19-4 \gamma(v')\right) v'+3 \left(\gamma(v')+4\right)\big). \nonumber
  \end{align}

    Note that \eqref{upward_util_minus_AE_util} is continuous in the relevant domain, and only has positive roots at $v'=v$ and $v'=\frac{1}{24v}(-6 v^2+\sqrt{36 v^4-108 v^3+249 v^2+36 v+4}+9 v+2)$. The second root is always greater than $\hat{v}=1/2$, so we can evaluate \eqref{upward_util_minus_AE_util} when $v' = \hat{v}$ to determine its sign.
    \begin{align*}
        \pi^U(\hat{v};v)-\pi^U(v;v) = -\frac{1}{12} (1-2 v)^2 (2-v) < 0.
    \end{align*}
    
    We conclude that the additional deviation payoff is negative for $v' \in (v,\hat{v}]$, so no upward deviation is profitable. 
  
    The last case is to consider when the agent deviates in the no-news case to bid below $b_{\text{NN}}(v)$. If a message arrives before they enter, then without loss we assume they bid according to \eqref{modified_bYN_for_deviators}, as we earlier showed that following \eqref{modified_bYN_for_deviators} is optimal.

    We let $\pi^D(v';v)$ denote the payoff from deviating downward to bid $b_{\text{NN}}(v')$. Then
    \begin{align*}
    \pi^D(v’;v) =\ &
    \hat{v}^2 \cdot \left(1 - \frac{b_{\text{NN}}(v)}{\tau}\right) \cdot \left(1 - \left(\frac{\hat{v} - v}{\hat{v}}\right)^2\right) \cdot (v - b_{\text{YN}}(v)) \\
    &+ \hat{v}^2 \cdot \int_{b_{\text{NN}}(v’)}^{b_{\text{NN}}(v)}
    \frac{1}{\tau} \cdot
    \left(1 - \left( \frac{\hat{v} - b_{\text{NN}}^{-1}(\tau’)}{\hat{v}} \right)^2 \right) \cdot
    \left(v - b_{\text{YN}}(b_{\text{NN}}^{-1}(\tau’)) \right)
    \, d\tau’ \\
    &+ \hat{v}^2 \cdot \left(\frac{b_{\text{NN}}(v’)}{\tau}\right) \cdot \left(1 - \left(\frac{\hat{v} - v’}{\hat{v}}\right)^2\right) \cdot (v - b_{\text{NN}}(v’)) \\
    &+ 2\hat{v}(1 - \hat{v}) \cdot \left(\frac{v’}{\hat{v}}\right) \cdot (v - b_{\text{NN}}(v’)).
    \end{align*}    

    The first term represents the case when the message arrives before $b_{\text{NN}}(v)$. The second term represents the case when the message arrives after $b_{\text{NN}}(v)$ but before $b_{\text{NN}}(v')$: in this case, the agent follows \eqref{modified_bYN_for_deviators} and 
    bids $b_{\text{YN}}(b_{\text{NN}}^{-1}(\tau’))$ after the message—the same bid that the highest possible remaining type in the game would make—if the message arrives at time $\tau'$. The third term represents the case when no other bidders arrived before $\tau$, but the message did not arrive before $b_{\text{NN}}(v')$. The fourth term represents the case when exactly one bidder arrived before $\tau$.

    We show the derivative of $\pi^D$ with respect to $v'$ is positive when $v' < v$:
    \begin{align}\label{util_NN_downward_deviation}
    \pi^D_1(v';v) = \frac{\left(24 v'^3-30 v'^2+\left(9-2 \gamma(v')\right) v'+\gamma(v')+4\right) (v-v')}{8 (1-v')}.
    \end{align}

    Equation  \eqref{util_NN_downward_deviation} is continuous over the relevant domain. Its roots are $v'=v$ and $v' = \frac{1}{12} \left(3 \pm \sqrt{57}\right)$, so the sign of \eqref{util_NN_downward_deviation} is the same for all $v'\in [0, v]$. When $v'=0$, $\pi^D_1(v' = 0;v) = v > 0$. Hence $\pi^D_1(v';v)>0$ for all $v'<v$ and we conclude that the deviation is not profitable.
  
  Thus there are no profitable deviations and $S$ constitutes an assortatively efficient equilibrium, as claimed.

\end{proof}

\end{appendix}

\bibliographystyle{chicago} 
\bibliography{queuebib.bib}

\begin{thebibliography}{}

\bibitem[\protect\citeauthoryear{Anunrojwong, Iyer, and Manshadi}{Anunrojwong
  et~al.}{2023}]{aniyma23}
Anunrojwong, J., K.~Iyer, and V.~Manshadi (2023).
\newblock Information design for congested social services: Optimal need-based
  persuasion.
\newblock {\em Management Science\/}~{\em 69\/}(7), 3778--3796.

\bibitem[\protect\citeauthoryear{Baccara, Lee, and Yariv}{Baccara
  et~al.}{2020}]{baleya20}
Baccara, M., S.~Lee, and L.~Yariv (2020, 07).
\newblock Optimal dynamic matching.
\newblock {\em Theoretical Economics\/}~{\em 15\/}(3), 1221--1278.

\bibitem[\protect\citeauthoryear{Bulow and Klemperer}{Bulow and
  Klemperer}{1994}]{bukl94}
Bulow, J. and P.~Klemperer (1994, 02).
\newblock Rational frenzies and crashes.
\newblock {\em Journal of Political Economy\/}~{\em 102\/}(1).

\bibitem[\protect\citeauthoryear{Che and Tercieux}{Che and
  Tercieux}{2024}]{chte24}
Che, Y.-K. and O.~Tercieux (2024, 01).
\newblock Optimal queue design.
\newblock {\em Journal of Political Economy, Forthcoming\/}.

\bibitem[\protect\citeauthoryear{Condorelli}{Condorelli}{2012}]{cond12}
Condorelli, D. (2012, 07).
\newblock What money can't buy: Efficient mechanism design with costly signals.
\newblock {\em Games and Economic Behavior\/}~{\em 75\/}(2), 613--624.

\bibitem[\protect\citeauthoryear{Dworczak}{Dworczak}{2023}]{dwor23}
Dworczak, P. (2023, 08).
\newblock Equity-efficiency trade-off in quasi-linear environments.
\newblock {\em Working paper\/}.

\bibitem[\protect\citeauthoryear{Gretschko, Rasch, and Wambach}{Gretschko
  et~al.}{2014}]{grrawa14}
Gretschko, V., A.~Rasch, and A.~Wambach (2014, 01).
\newblock On the strictly descending multi-unit auction.
\newblock {\em Journal of Mathematical Economics\/}~{\em 50}, 79--85.

\bibitem[\protect\citeauthoryear{Hartline and Roughgarden}{Hartline and
  Roughgarden}{2008}]{haro08}
Hartline, J.~D. and T.~Roughgarden (2008, 05).
\newblock Optimal mechanism design and money burning.
\newblock {\em Proceedings of the Fortieth Annual ACM Symposium on Theory of
  Computing\/}, 75--84.

\bibitem[\protect\citeauthoryear{Hassin and Haviv}{Hassin and
  Haviv}{2003}]{haha03}
Hassin, R. and M.~Haviv (2003).
\newblock {\em To Queue or Not to Queue: Equilibrium Behavior in Queueing
  Systems}.
\newblock Boston, MA: Springer US.

\bibitem[\protect\citeauthoryear{Holt and Sherman}{Holt and
  Sherman}{1982}]{hosh82}
Holt, C.~A. and R.~Sherman (1982, 04).
\newblock Waiting-line auctions.
\newblock {\em Journal of Political Economy\/}~{\em 90\/}(2), 280--294.

\bibitem[\protect\citeauthoryear{Krishna}{Krishna}{2009}]{kris09}
Krishna, V. (2009, 08).
\newblock {\em Auction Theory\/} (2 ed.).
\newblock 30 Corporate Drive, Suite 400, Burlington, MA 01803, USA: Academic
  Press.

\bibitem[\protect\citeauthoryear{Leshno}{Leshno}{2022}]{lesh22}
Leshno, J. (2022, 12).
\newblock Dynamic matching in overloaded waiting lists.
\newblock {\em American Economic Review\/}~{\em 112\/}(12), 3876--3910.

\bibitem[\protect\citeauthoryear{Leswing}{Leswing}{2017}]{news_lesw17}
Leswing, K. (2017).
\newblock Apple says to line up early if you want an iphone x.
\newblock {\em Business Insider\/}.

\bibitem[\protect\citeauthoryear{Murra-Anton and Thakral}{Murra-Anton and
  Thakral}{2024}]{muth24}
Murra-Anton, Z. and N.~Thakral (2024).
\newblock The public housing allocation problem.
\newblock {\em Working paper\/}.

\bibitem[\protect\citeauthoryear{Naor}{Naor}{1969}]{naor69}
Naor, P. (1969, 01).
\newblock The regulation of queue size by levying tolls.
\newblock {\em Econometrica\/}~{\em 37\/}(1), 15--24.

\bibitem[\protect\citeauthoryear{Price}{Price}{2017}]{new_pric17}
Price, R. (2017).
\newblock People have been queueing outside the apple store for days to buy the
  iphone x.
\newblock {\em Business Insider\/}.

\bibitem[\protect\citeauthoryear{Reny}{Reny}{1999}]{reny99}
Reny, P.~J. (1999, 09).
\newblock On the existence of pure and mixed strategy nash equilibria in
  continuous games.
\newblock {\em Econometrica\/}~{\em 67\/}(5), 1029--1056.

\bibitem[\protect\citeauthoryear{Taylor, Tsui, and Zhu}{Taylor
  et~al.}{2003}]{tats03}
Taylor, G.~A., K.~K. Tsui, and L.~Zhu (2003).
\newblock Lottery or waiting-line auction?
\newblock {\em Journal of Public Economics\/}~{\em 87\/}(5), 1313--1334.

\end{thebibliography}

\end{document}